\definecolor{DarkBlue}{rgb}{0.1,0.1,0.5}
\definecolor{DarkGreen}{rgb}{0.1,0.5,0.1}
\renewcommand*{\backref}[1]{}
\renewcommand*{\backrefalt}[4]{%
    \ifcase #1 (Not cited.)%
    \or        (Cited on page~#2)%
    \else      (Cited on pages~#2)%
    \fi}
\colorlet{pink}{red!40}
\colorlet{blue}{cyan!60}
\colorlet{mygray}{gray!55}
\renewcommand{\paragraph}{%
  \@startsection{paragraph}{4}%
  {\z@}{1.0ex \@plus 1ex \@minus .2ex}{-1em}%
  {\normalfont\normalsize\bfseries}%
}
\let\oldnl\nl
\newcommand{\nonl}{\renewcommand{\nl}{\let\nl\oldnl}}
\newtheorem{theorem}{Theorem}[section]
\newtheorem{lemma}[theorem]{Lemma}
\newtheorem{proposition}[theorem]{Proposition}
\newtheorem{definition}{Definition}[section]
\theoremstyle{definition}
\newtheorem{remark}[definition]{Remark}
\newcommand{\Econ}{\mathcal{E}}
\newcommand{\one}{\mathbf 1}
\newcommand{\zero}{\mathbf 0}
\def\ep{\varepsilon}
\def\one{\mathbf{1}}
\def\da{\delta}
\def\argmin{\mbox{argmin}}
\def\cvh{\mbox{cvh}}
\definecolor{Blue}{rgb}{0.,0.,1.}
\begin{document}

\title{{\bfseries The Edgeworth Conjecture with Small Coalitions and \\  Approximate Equilibria in Large Economies}}
\author{Siddharth Barman\thanks{Indian Institute of Science. \texttt{barman@iisc.ac.in}} \qquad  \qquad Federico Echenique\thanks{California Institute of Technology. \texttt{fede@hss.caltech.edu}}}

\date{}
\maketitle

\begin{abstract}We revisit the connection between bargaining and equilibrium in exchange economies, and study its algorithmic implications. We consider bargaining outcomes to be allocations that cannot be blocked (i.e., profitably re-traded) by coalitions of small size and show that these allocations must be approximate Walrasian equilibria. Our results imply that deciding whether an allocation is approximately Walrasian can be done in polynomial time, even in economies for which finding an equilibrium is known to be computationally hard.
\end{abstract}

\section{Introduction}
We present a {\em quantitative core convergence theorem}, with economic and algorithmic implications. Economists have, since Francis Edgeworth in 1881, been interested in the convergence of bargaining in finite economies to market equilibrium. The celebrated {\em Edgeworth conjecture} states that bargaining indeterminacy and monopoly power disappear in large economies. Edgeworth focused on the contract curve, what we think of now as the core. The core, however, requires agents to join coalitions of arbitrary size. In our quantitative core convergence theorem we show that coalitions of a fixed size suffice. The fixed size depends polynomially on the approximation error, the number of goods in the economy, and consumers' heterogeneity. 

The economic analysis of markets is based on the notion of Walrasian (competitive) market equilibrium and price-taking behavior. Walrasian equilibrium requires that goods are assigned prices, each agent maximizes her utility subject to what she can afford, and the market clears. Walrasian equilibria provide a conceptually crisp framework for economic analysis: with prices as guide, agents direct themselves, in a decentralized manner, to a Pareto optimal allocation. That agents are price takers means that they treat prices as fixed; they will not renegotiate the terms of trade with other agents. 

The Edgeworth conjecture, or core convergence, is the economist's basic justification for the price-taking assumption. Specifically, it postulates that, in a large economy, bargaining collapses to Walrasian equilibrium, i.e., in a large economy, no group of agents would choose to upset an equilibrium by renegotiating among themselves.


In a small economy, agents bargain over the terms of trade. They may not trade at fixed prices. Each individual agent in a small economy is to some degree unique and can therefore command a certain degree of monopoly power. Imagine the market for professional soccer players, or high-profile academics, in which each individual agent is special. Then terms of trade are determined by haggling and bargaining, not as fixed prices in competitive markets. By contrast, in a large economy with limited heterogeneity, no agent is special. Imagine the market for a commodity, like corn or steel, or for a standard piece of technology, like a personal desktop. In such economies, bargaining outcomes (formally, outcomes in the core) should collapse to the competitive outcomes that correspond to a Walrasian equilibrium. This claim is the Edgeworth conjecture.

Bargaining is a complex process, and difficult---arguably impossible---to model in detail. So economists use the notion of the core to impose discipline on bargaining outcomes. The core of an economy is the set of allocations (redistribution of goods among agents) that no subset, or coalition, of agents can improve upon via re-trading among themselves. In other words, the core comprises the outcomes that are not ``blocked'' by any coalition of agents. Any allocation that is not in the core would be renegotiated, or re-contracted, by some set of agents. Note that the definition of the core involves all possible coalitions of agents, of all possible sizes.

The core convergence theorem states that in a large (in an asymptotic sense) economy any outcome in the core, arrived at through seemingly indeterminate reallocations and bargaining, will, in fact, be a Walrasian equilibrium. In other words, the core convergence theorem captures the meaning of the Edgeworth conjecture, and supports the use of Walrasian equilibria and price-taking behavior in large economies. 

Edgeworth's conjecture was first formalized by Debreu and Scarf~\cite{debreu1963limit} as well as Aumann~\cite{aumann1964markets}.\footnote{See \cite{hildenbrand2015core} and \cite{anderson1992core} for reviews of the literature on core convergence.} In our paper we shall follow Debreu and Scarf; Aumann's model assumes a limit economy with a continuum of agents, hence there is no scope in his model to address the questions we are interested in.\footnote{A ``fractional'' version in atomless economies is still possible, see \cite{schmeidler1972remark,grodal1972second,vind1972third}.} Debreu and Scarf consider a {\em large economy with limited heterogeneity.} Their idea, which goes back to Edgeworth himself, is to postulate a finite number of agent types, say $h$, and then imagine a sequence of replica economies. The $n$-th replica of the economy has $n$ identical copies of each type of agent $t \in [h]$. Under standard assumptions, they show that, in the limit as $n\rightarrow\infty$, the core of the $n$th replica economy approaches the set of Walrasian equilibrium allocations.
Note that, the core convergence theorem not only requires $n$ to be asymptotically large, but, to address the core, one has to account for coalitions that are arbitrarily large, both in size and in number.  

In contrast to the core, we focus on the $\kappa$-core: the set of allocations that cannot be blocked by any coalition of size at most $\kappa$. In a large economy, it is unlikely that coalitions of arbitrary size can function effectively. Even if a large coalition can effectively communicate among its members, they would face significant hurdles in aggregating their preferences to achieve an outcome that is collectively better for the members of the coalition. Think of Arrow's theorem: preference aggregation is known to be very problematic, and the size and number of all coalitions in a large economy makes the problem even worse. The $\kappa$-core demands much less of the blocking coalitions, and seems to us a more plausible solution concept than the core. In addition, the number of coalitions to check for is much smaller in the $\kappa$-core than in the core (${n \choose \kappa}$ with fixed $\kappa$, rather than $2^n$). 

Our main result is that under benign assumptions on agents' preferences, an allocation that is in the $\kappa$-core must be an approximate equilibrium (an $\ep$-Walrasian equilibrium).  $\kappa$ depends polynomially on the approximation guarantee and on the agent heterogeneity in the economy. Our result is non-asymptotic; it  holds for finite economies, as long as the number of agents, $n$, exceeds $\kappa$.  We adopt the model and the assumptions of Debreu and Scarf. In fact, to make our paper self-contained, their result is restated in Section~\ref{section:quant-debreu-scarf}. 

We are not the first to prove a quantitative version of Edgeworth's conjecture. The closest paper to ours is Mas-Colell \cite{mas1979refinement}, who also exhibits a bound on the size of blocking coalitions needed for core convergence.\footnote{Mas-Colell's result builds on techniques due to Anderson  \cite{anderson1978elementary}.  Anderson's theorem, however, does not bound the size of the blocking coalitions, and hence does not speak to the $\kappa$-core. His result connects the size (and other parameters) of the economy with the approximation guarantee.} The main difference between Mas-Colell's result and ours is that he considers an average error as his approximation guarantee. He ensures that the average budget gap, where the average is taken over all agents in the economy, is small. If we translate his results into our stronger objective of a {\em per-agent} approximation guarantee, his result requires  $\kappa$ to be a function of $n$.\footnote{Per-agent approximation guarantees are the focus of modern results in algorithmic game theory. In the terminology of Anderson's survey \cite{anderson1992core} it reflects uniform convergence (what he terms U3), and it is rare in the older literature on core convergence, even for results that rely on the power of all coalitions.} Our result will instead say that $\kappa$ is $\mathcal{O}(1/\ep^2)$ (as long as $n\geq \kappa$).  We should emphasize that Mas-Colell's theorem requires substantially weaker assumptions than ours and does not rely on the Edgeworth-Debreu-Scarf replica setting. 

Our main theorem has important algorithmic consequences for the problem of deciding whether an allocation can be supported as an approximate Walrasian equilibrium.  Imagine a policy maker who would like to install a given allocation using decentralized markets. The standard advise to such a policy maker comes from the second welfare theorem, which says that any Pareto optimal allocation can be obtained (given certain assumptions on the economy) as a Walrasian equilibrium, using a system of taxes and subsidies to adjust all agents' incomes. We study the same question as in the second welfare theorem, but when the policy maker is unable to freely use taxes or subsidies. 

Specifically, given an allocation in an exchange economy, we consider the problem of deciding whether there exists prices that make the allocation an approximate Walrasian equilibrium. Our main algorithmic result is to provide a polynomial-time algorithm that resolves the question. Our finding stands in stark contrast to the existing hardness results for Walrasian equilibria, see, for example, \cite{chen09hardness,codenotti2006leontief,vazirani10,garg2017settling}. There are settings for which it is known that finding an equilibrium is hard, and for which our algorithm efficiently decides  whether a given allocation can be supported as an approximate Walrasian equilibrium.

We are, to the best of our knowledge, the first to study the question of ``testing'' a putative equilibrium allocation. That is, without use of taxes and subsidies, and thus outside of the second welfare theorem framework. Our paper resolves this question with a satisfactory algorithm. 

\subsection{Informal statement of our results}

Our paper makes an economic and an algorithmic contribution. The economic contribution is in the form of a non-asymptotic core convergence theorem, for a weakened notion of the core: one that only requires robustness to blocks by small coalitions. The algorithmic contribution is to the problem of testing whether an allocation is a (approximate) Walrasian equilibrium allocation. The main lemma in our core convergence result allows us to develop an efficient algorithm that decides whether an allocation can be supported as an approximate Walrasian equilibrium (and find the supporting/equilibrium prices, if they exist). 

The $\kappa$-core is the set of allocations that are not blocked by coalitions of size at most $\kappa$. We show that core convergence is obtained as long as $\kappa$ is polynomially large (in the approximation parameter and agent heterogeneity) and the size of the economy is at least $\kappa$. Thus, core convergence is obtained in finite economies. 

Specifically, we prove that in any replica economy with well-behaved utilities, allocations in the $\kappa$-core are, in fact, $\varepsilon$-Walrasian, as long as $\kappa$ is $\mathcal{O}\left( \frac{h^2 \ell}{\varepsilon^2}\right)$; here, $h$ denotes the heterogeneity of the economy (i.e., the number of different types of agents), $\ell$ denotes the number of goods, and $\varepsilon>0$ is the approximation parameter; see Theorem~\ref{thm:quant-debreu-scarf}.  Note that our result does not require the number of agents $n$, or the size of blocking coalitions, to be arbitrarily large. The result is applicable whenever $n$ is $\Omega(\kappa)$. 

Our work develops new techniques for addressing core convergence. Instead of relying on limiting, or measure-theoretic, arguments as in the classical literature (see~\cite{hildenbrand2015core}), our proof builds upon geometric insights and dimension-free results. Specifically, we employ the approximate version of Carath\'{e}odory's theorem (see~\cite{barman2015approximating} and references therein). The classical theorem of Debreu and Scarf requires a ``rounding to rational numbers'' argument--the size of the economy needs to be arbitrarily large so that such a rounding is possible. Our proof avoids such a rounding by invoking the approximate Carath\'{e}odory theorem. In this manner, we ascertain that the largest coalitions needed to ensure an approximate Walrasian equilibrium are only polynomially large and independent of the size of the economy. 

Our key technical insight is provided through an efficiently implementable characterization of approximate Walrasian equilibrium allocations (see Lemma~\ref{lemma:bounded-hull} in Section~\ref{section:bounded-hull}). Our characterization yields, not only a quantitive version of core convergence, but also an efficient algorithm that tests whether a given allocation can be supported as an approximate equilibrium.

Our algorithm takes as input an allocation, and determines whether there are prices that make the allocation into an approximate Walrasian equilibrium. This result is potentially useful to understand the policy objectives that can be obtained in a descentralized fashion, as a Walrasian equilibrium. It also contributes to the algorithmic literature on equilibrium computation, as it identifies a notable dichotomy between testing and computing Walrasian equilibria. 

The problem of computing Walrasian equilibria has been studied extensively in algorithmic game theory. Most results in this direction are negative: finding a Walrasian equilibrium is computationally hard for general settings; see, e.g.,~\cite{chen09hardness,codenotti2006leontief,vazirani10,garg2017settling}. By contrast, we show that the testing counterpart is computationally tractable. In particular, we develop a polynomial-time algorithm that, for a given allocation, finds prices (if they exist) that make the allocation an approximate Walrasian equilibrium (Theorem~\ref{theorem:testing-algorithm}).\footnote{Note that, in particular settings, it might be possible to efficiently test if a given allocation is Pareto optimal and, hence (using the characterization provided by the welfare theorems), determine whether the allocation can be supported as an \emph{exact} equilibrium. However, in and of itself, this observation does not extend to approximate equilibria. Furthermore, it does not directly lead to an efficient method for finding the supporting, equilibrium prices.}
 
The above-mentioned results require the utilities in the economy to be strongly concave. Hence, as is, they do not address piecewise-linear concave (PLC) utilities, which, while concave, are not strongly concave. The PLC case is particularly important in light of the hardness results in~\cite{deng2008computation, chen09hardness,garg2017settling}.  However, we show that the developed ideas can be adapted to obtain an efficient testing algorithm even for PLC economies (Theorem~\ref{theorem:plc}). 


\section{Notation and Preliminaries} 
\label{section:notation}
\noindent
{\bf Notational Conventions.} For vectors $x, y \in \mathbb{R}^\ell$, write $x \leq y$ iff $x_i \leq y_i$ for all $i \in [\ell]$. Also, $x \ll y$ denotes that $x_i < y_i$ for all $i \in [\ell]$. We will use $\Delta$ to refer to the standard simplex in $\mathbb{R}^\ell$, $\Delta \coloneqq \left\{ x \in \mathbb{R}^\ell \mid \sum_{i=1}^\ell x_i = 1 \text{  and  } x_i \geq 0 \text{ for all } i \right\}$. The convex hull of a set of vectors $A \in \mathbb{R}^\ell$ will be denoted by $\cvh(A)$. In addition, the all-ones and all-zeros vector will be denoted by $\one$ and $\zero$, respectively. \\

\noindent
{\bf Exchange economies.}  An exchange economy $\Econ$ comprises of a set of consumers, $[h] \coloneqq \{1, 2, \ldots, h\}$, and a set of goods, $[\ell] \coloneqq \{1, 2, \ldots, \ell \}$. Each consumer is endowed with different quantities of the goods as an endowment; specifically, for every consumer $i \in [h]$, the vector $\omega_i \in \mathbb{R}_+^\ell$ denotes (componentwise) the amount of each good endowed to $i$. The  preference of consumer $i \in [h]$, over bundles of goods, is specified by a utility function, $u_i : \mathbb{R}_+^\ell \mapsto \mathbb{R}$. In particular, every consumer is described by a pair $(u_i, \omega_i)$. An exchange economy $\Econ$ is a tuple $\left( (u_i, \omega_i) \right)_{i=1}^h$. 

We shall adopt some standard assumptions on consumers' utilities: $u_i$s are continuous and monotone increasing. We further assume that the utilities are continuously differentiable\footnote{This additional smoothness assumption is primarily for ease of exposition. One can extend the arguments developed in this work to continuous functions by replacing gradients with  subgradients.} and $\alpha$-strongly concave, with $\alpha >0$. Strong concavity (convexity) is a well-studied property in convex optimization (see, for example, \cite{boyd2004convex}). It provides a parametric strengthening of concavity. Formally, a differentiable function, $u: \mathbb{R}^\ell \mapsto \mathbb{R}$, is said to be $\alpha$-strongly concave within a set $\mathcal{R} \subset \mathbb{R}^\ell$ iff the following inequality holds for all $x, y \in \mathcal{R}$:
\begin{align*}
u(y) \leq u(x) + \nabla u(x)^T (y-x) - \frac{\alpha}{2} \| y - x \|^2.
\end{align*}
Here, $\nabla u(x)$ is the gradient of the function $u$ at point $x$ and $\| \cdot \|$ denotes the Euclidean norm. Note that if $\alpha = 0$, then $u$ is simply a concave function. Furthermore, the case of $\alpha>0$ corresponds to strict concavity. 

The set $\mathcal{R}$ specifies the subdomain over which strong concavity holds. We assume that $\mathcal{R}$ is appropriately large and, in particular, that it contains the Euclidean ball of radius $r \in \mathbb{R}_+$ and center $\omega_i$ (the endowment vector), for each $i$. We assume that the radius $r$ satisfies $\alpha r^2 \geq \frac{2\varepsilon \lambda \ell}{h} + 2$, with $\varepsilon >0$ being the approximation parameter and $\lambda$ denoting the Lipschitz constant of the utilities. At a high level, the condition asserts that agent $i$'s utility function has sufficient curvature close to $i$'s endowment $\omega_i$.  The strong concavity assumption is satisfied by standard examples of utility functions used in economics; in particular, utilities of the form $u(y) \coloneqq  ( \sum_{i=1}^\ell y_i^\rho)$, with $0 < \rho < 1$---i.e., utilities within the CES (constant elasticity of substitution) family of utilities---are strongly concave; see Appendix~\ref{appendix:example} for an illustrative example.  Throughout, for ease of presentation, we will simply say that the utilities are $\alpha$-strongly concave, with the set $\mathcal{R}$ being implicit. \\



\noindent
{\bf Core.} For an exchange economy with $h$ consumers and $\ell$ goods, $\Econ =  \left( (u_i, \omega_i) \right)_{i=1}^h$, we define the following central constructs 

\begin{itemize}
\item An \emph{allocation} in $\Econ$ is a vector, $\overline{x}=(\overline{x}_i)_{i=1}^h \in \mathbb{R}^{h\ell}_+$, such that $\sum_{i=1}^h \overline{x}_i=\sum_{i=1}^h \omega_i $. In other words, an allocation corresponds to a redistribution of the endowments among the consumers. 
\item A nonempty subset $S\subseteq [h]$ is a \emph{coalition}. Let $S$ be a coalition, then a vector $(y_i)_{i\in S}$ is an \emph{S-allocation} if $\sum_{i\in S} y_i=\sum_{i\in S} \omega_i$. 
\item A coalition $S$ \emph{blocks} the allocation $\overline{x} = (\overline{x}_i)_{i=1}^h$ in $\Econ$ if there exists an $S$-allocation $(y_i)_{i\in S}$ such that $u_i(y_i) > u(\overline{x}_i)$ for all $i\in S$. That is, a blocking coalition is a group of consumers that are better off trading among themselves, than they would have been under $\overline{x}$.  
\item The \emph{core} of $\Econ$ is the set of all allocations that are not blocked by any coalition.
\item The \emph{$\kappa$-core} of $\Econ$, for $\kappa\in \mathbb{Z}_+$, is the set of allocations that are not blocked by any coalition of cardinality at most $\kappa$. Hence, in an economy with $n$ consumers, the core is the $n$-core and the set of individually rational allocations are the $1$-core.
\end{itemize}


\noindent 
{\bf Equilibrium and approximate equilibrium.} 
In an exchange economy $\Econ= \left( (u_i, \omega_i) \right)_{i \in [h]}$, a \emph{Walrasian equilibrium} is a pair $(p, \overline{x}) \in \mathbb{R}^\ell_+ \times \mathbb{R}^{h \ell}_+$ in which  
\begin{enumerate}
\item $p \in \mathbb{R}_+^\ell$ is a price vector for the $\ell$ goods in the economy. 
\item $\overline{x}=\left(\overline{x}_i \right)_{i \in [h]} \in \mathbb{R}^{h \ell}_+$ is an allocation, i.e., under $\overline{x}$ supply equals the demand, $\sum_{i=1}^h \overline{x}_i = \sum_{i=1}^h \omega_i$.
\item Every consumer $i \in [h]$ maximizes its utility $u_i$, while consuming its endowment $\omega_i$. That is, $p^T \overline{x}_i = p^T \omega_i$ and, for all bundles $y \in \mathbb{R}^\ell_+$ with the property that $u_i(y) > u_i(\overline{x}_i)$, we have $p^T y_i > p^T \omega_i$. 

\end{enumerate}

Next, we describe what we mean by an approximate Walrasian equilibrium. We use a notion of approximation in which consumers are optimizing exactly, subject to budget constraints that are satisfied approximately. 

Formally, in an exchange economy $\Econ= \left( (u_i, \omega_i) \right)_{i \in [h]}$,  \emph{$\varepsilon$-Walrasian equilibrium} is a pair $(p, \overline{x}) \in \mathbb{R}^\ell_+ \times \mathbb{R}^{h \ell}_+$ in which the (normalized) price vector $p \in \Delta$ and the allocation $\overline{x} \in  \mathbb{R}^{h \ell}_+$ satisfy the following two conditions, for all consumers $i \in [h]$:  
\begin{itemize}
\item[(i)] $|p^T \overline{x}_i - p^T \omega_i| \leq \varepsilon$ and 
\item[(ii)] for any bundle $y \in \mathbb{R}^\ell_+$, with the property that $u_i(y) > u_i (\overline{x}_i)$, we have $p^T y > p^T \omega_i - \varepsilon/h$. 
\end{itemize}

Our notion of approximation is different from the standard use in algorithmic game theory, where agents are approximately optimizing, but it conforms to the most used definitions in economic theory: see~\cite{starr1969quasi,arrow1971general,hildenbrand1973existence,hildenbrand2015core,anderson1978elementary,andersonmktvalue,anderson1982approximate,mas1978note,mas1979refinement}. Economists do not view utilities as objective, observable, entities. Hence it is natural in economics to measure the approximation error in monetary terms, using consumers' expenditure, instead of measuring approximation in terms of utility loss. 

That said, it is straightforward to move between different notions of approximate equilibrium, and our result continues to apply. One can, instead, require that the approximation guarantee be in terms of the utilities. Or that budget exhaustion holds exactly (so the value  of agents' consumption equals the value of their endowment), but the overall supply is only approximately equal to the demand. In Appendix~\ref{appendix:other-notions} we show that one can obtain these kinds of approximations as well from the results in this present paper. \\


\noindent
{\bf Walrasian allocation.} The term \emph{(approximate) Walrasian allocation} refers to an allocation $\overline{x} \in \mathbb{R}^{h\ell}_+$ in $\Econ$ for which there exists a price vector $p \in \Delta$ such that $(p, \overline{x})$ is a (approximate) Walrasian equilibrium. \\
%

\noindent 
{\bf Utility normalization.} 
Our work addresses additive approximations, hence, we will follow the standard assumption (used in the context of absolute-error bounds) that the utilities are normalized so that $u_i(x_i) \in [0,1)$ for all consumers $i \in [h]$ and all allocations $(x_i)_i \in \mathbb{R}^{h \ell}_+$. In fact, we shall normalize then so that  $u_i(x_i) \in[0,1-\bar \eta)$, where $\bar\eta\in (0,1)$ for all consumers $i \in [h]$ and all allocations $(x_i)_i \in \mathbb{R}^{h \ell}_+$. These normalizations are possible as the set of all allocations is compact and, hence, each utility function will achieve a maximum over the space of allocations. 

Note that the utilities are normalized only for feasible allocations in the underlying exchange economy. A bundle which is, say, component-wise greater than the total endowment in the exchange economy---or, allocations in a replica economy (defined below)---can have utility arbitrarily greater than one.

Also, we will use $\lambda$ to denote the Lipschitz constant of the utility functions, $u_i$s. \\ 

\noindent 
{\bf Replica Economies.} Let $\Econ = \left( (u_i, \omega_i) \right)_{i \in [h]}$ be an exchange economy over the commodity space $\mathbb{R}^\ell_+$. The \emph{$n$-th replica} of $\Econ$, for $n\geq 1$, is the exchange economy $\Econ^n=\left( (u_{i, t},\omega_{i,t}) \right)_{i \in [n],  t \in [h]}$, with $nh$ consumers. In $\Econ^n$ the consumers are indexed by $(i,t)$, with index $i \in [n]$ and type $t \in [h]$, and they satisfy:
\[
u_{i,t} = u_t \text{ and } \omega_{i,t} = \omega_{t}.
\]

Replica economies constitute the basic model of a large economy with limited heterogeneity. See \cite[Chapter 18]{mas1995microeconomic} for a textbook treatment.\footnote{The idea of replicas was suggested by Edgeworth himself in his discussion of the competitive hypothesis in large economies. He used $X$ and $Y$ to refer to two different agents, and wrote:
``Let us now introduce a second X  and a second Y ; so that the field of competition consists of two Xs and two Ys. And for the sake of illustration (not of argument) let us suppose that the new X has the same requirements, the same nature as the old X; and similarly that the new Y is equal-natured with the old.''
\citep{edgeworth1881mathematical}.}

As mentioned previously, the utility normalization is considered only with respect to feasible allocations in the underlying economy $\Econ$.\footnote{This, in particular, ensures that this normalization is compatible with the strict monotonicity of the utilities.}  Indeed, the consumers' utilities scale up for feasible bundles in $\Econ^n$, since the amount of each good in the economy increases as a function of $n$. 

An allocation $\overline{x}=(\overline{x}_{i,t})$ of $\Econ^n$ is said to have the \emph{equal treatment property} iff all the consumers with the same type are allocated identical bundles, i.e., iff $\overline{x}_{i,t}= \overline{x}_{i',t}$ for every $t \in [h]$ and $i,i' \in [n]$. 

The following lemma is established in the work of Debreu and Scarf~\cite{debreu1963limit} and it asserts that allocations contained in the $h$-core necessarily satisfy the equal treatment property. 
Debreu and Scarf state the result as a consequence of the core, or, in the terminology of our paper,  the $(n\times h)$-core. In actuality, the stronger statement below holds true. We include a proof in Appendix~\ref{appendix:equal-treatment} for completeness, but the proof is really the same as that of Debreu and Scarf.

\begin{lemma}[Equal treatment property]
\label{lem:equaltreatment}
Let the utilities of each consumer be strictly monotonic, continuous, and strictly concave in an exchange economy $\Econ= \left( (u_i, \omega_i) \right)_{i \in [h]}$ and, hence, in the corresponding replica economy $\Econ^n$. Then, every $h$-core allocation of $\Econ^n$ satisfies the equal treatment property. 
\end{lemma}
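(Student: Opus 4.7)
The plan is to prove the contrapositive: if an allocation $\overline{x}$ in $\Econ^n$ violates equal treatment, then it is blocked by a coalition of size at most $h$. So suppose $\overline{x}_{i,t} \neq \overline{x}_{i',t}$ for some type $t^\star$ and indices $i,i' \in [n]$. For every type $t \in [h]$, pick a ``worst-off'' representative of that type, namely $i(t) \in \argmin_{j \in [n]} u_t(\overline{x}_{j,t})$ (ties broken arbitrarily), and form the candidate blocking coalition
\[
S \coloneqq \{(i(t),t) : t \in [h]\},
\]
which has exactly $h$ members. The proposal for the $S$-allocation is the type-wise average, $y_{(i(t),t)} \coloneqq \tfrac{1}{n}\sum_{j=1}^n \overline{x}_{j,t}$. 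Feasibility is immediate: summing over $t$ gives $\sum_{t} y_{(i(t),t)} = \tfrac{1}{n}\sum_{j,t}\overline{x}_{j,t} = \tfrac{1}{n}\cdot n\sum_t \omega_t = \sum_{(i,t)\in S} \omega_t$, using that $\overline{x}$ is feasible in $\Econ^n$ and that $S$ contains one consumer of each type.

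Next I would invoke strict concavity type by type. For every $t$ for which the bundles $\overline{x}_{1,t},\dots,\overline{x}_{n,t}$ are not all identical, Jensen's inequality (in its strict form, for strictly concave $u_t$) gives
\[
u_t\!\left(\tfrac{1}{n}\textstyle\sum_{j=1}^n \overline{x}_{j,t}\right) \;>\; \tfrac{1}{n}\sum_{j=1}^n u_t(\overline{x}_{j,t}) \;\geq\; u_t(\overline{x}_{i(t),t}),
\]
so the representative of type $t$ is strictly better off under $y$. For types whose bundles happen to be equal across copies, the average equals $\overline{x}_{i(t),t}$ and the representative is only weakly better off. Since the allocation violates equal treatment, the set $T \subseteq [h]$ of types with non-constant bundles is nonempty, so at least one member of $S$ is strictly improved.

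The obstacle is that blocking requires \emph{strict} improvement for every member of $S$, whereas types outside $T$ are only indifferent so far. I would remove this by a small continuous perturbation. Pick $\delta > 0$ and set
\[
\widetilde y_{(i(t),t)} \;=\; y_{(i(t),t)} - \tfrac{\delta}{|T|}\,\one \quad \text{for } t \in T, \qquad \widetilde y_{(i(t),t)} \;=\; y_{(i(t),t)} + \tfrac{\delta}{h-|T|}\,\one \quad \text{for } t \notin T,
\]
with the convention that no perturbation is needed if $T=[h]$. The totals still cancel, so $(\widetilde y_{(i(t),t)})_{(i(t),t)\in S}$ remains an $S$-allocation. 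By continuity of the utilities, the strict inequalities established for $t\in T$ persist for all sufficiently small $\delta>0$, while by strict monotonicity the representatives of types $t\notin T$ strictly benefit from receiving the extra $\tfrac{\delta}{h-|T|}\one$. Hence every consumer in $S$ is strictly better off, so $S$ (which has $|S|=h$) blocks $\overline{x}$, contradicting the assumption that $\overline{x}$ lies in the $h$-core.

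The main thing to be careful about is the coexistence of ``strictly improved'' and ``merely indifferent'' types: without the perturbation step the coalition $S$ only \emph{weakly} blocks $\overline{x}$, which is not enough. The perturbation uses nothing beyond continuity, strict monotonicity, and the presence of at least one type with non-constant bundles; all other ingredients (feasibility of the average, strict Jensen) are straightforward consequences of the Debreu--Scarf hypotheses already assumed in the lemma.
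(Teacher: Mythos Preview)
Your proof is correct and follows essentially the same route as the paper (and Debreu--Scarf): form the coalition of one worst-off representative per type, give each the type-wise average bundle, and invoke strict concavity. The paper's appendix actually stops at ``$u_t(\widehat{x}_t)\geq u_t(\overline{x}_{j^*(t),t})$ with one inequality strict'' and declares that $S$ blocks, so your explicit perturbation step---transferring a small amount from the strictly-improved types to the merely-indifferent ones via continuity and strict monotonicity---is more careful than what the paper writes down; the only minor caveat is that when subtracting $\tfrac{\delta}{|T|}\one$ you should ensure the resulting bundle stays in $\mathbb{R}^\ell_+$, which is easily handled by transferring only along a coordinate that $y_{(i(t),t)}$ holds in strictly positive quantity.
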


By Lemma~\ref{lem:equaltreatment}, we can express the $h$-core allocations of $\Econ^n$ as vectors in $\mathbb{R}^{h\ell}_+$; with the convention that consumers that have the same type receive the same bundle. This succinct representation also applies to our main result, which considers $\kappa$-cores, with $\kappa \geq h$. Hence, for brevity, we represent $\kappa$-core allocations\footnote{Note that in $\Econ^n$ an arbitrary allocation (say, one that does not satisfy equal treatment) is, in fact, a vector in $\mathbb{R}^{nh\ell}_+$.} of the replica economy $\Econ^n$ as vectors in $\mathbb{R}^{h\ell}_+$. \\

The definition of a Walrasian equilibrium (both, exact and approximate) naturally extends to replica economies $\Econ^n$, wherein the price vector $p \in \mathbb{R}^\ell_+$ and the allocation $\overline{x} \in \mathbb{R}^{n h \ell}_+$. 

Furthermore, note that if $(p, \overline{x})$ is a (approximate) Walrasian equilibrium in $\Econ$, then $p$ and an $n$-times replicated version of $\overline{x}$ constitute a (approximate) Walrasian equilibrium in $\Econ^n$. Analogously, if $\overline{x}$ is a Walrasian allocation in $\Econ$, then it\footnote{Specifically, a replicated version of $\overline{x}$} is a Walrasian allocation in $\Econ^n$ as well, for all $n \geq 1$.

Throughout, we will address allocations that satisfy the equal treatment property. Hence, we will simply state that $(p, \overline{x}) \in \mathbb{R}^\ell_+ \times \mathbb{R}^{h \ell}_+$ is a (approximate) Walrasian equilibrium---or that $\overline{x} \in \mathbb{R}^{h\ell}_+$ is a (approximate) Walrasian  allocation---without distinguishing between $\Econ$ and $\Econ^n$ in this context.

\section{A Quantitative Core Convergence Theorem}
\label{section:quant-debreu-scarf}

\begin{quote}
    ``\ldots {\em the reason why the complex play of competition tends to a simple uniform result -- what is arbitrary and indeterminate in contract between individuals becoming extinct in the jostle of competition -- is to be sought in a principle which pervades all mathematics, the principle of limit, or law of great numbers as it might perhaps be called.}'' \\ (Francis Ysidro Edgeworth~\cite{edgeworth1884rationale}) \\
\end{quote} 

The following classic result of Debreu and Scarf~\cite{debreu1963limit} establishes that, as $n$ tends to infinity, each allocation in the core of $\Econ^n$ is a Walrasian allocation. 

\begin{theorem}[Debreu-Scarf Core Convergence Theorem]
Assume that in an exchange economy $\Econ$---with $h$ consumers and $\ell$ number of goods---the consumers' utilities are strictly monotonic, continuous, and strictly quasiconcave. Furthermore, suppose that an allocation $\overline{x} \in \mathbb{R}^{h\ell}_+$ belongs to the core of $\Econ^n$ for all $n \geq 1$. Then, $\overline{x}$ is a Walrasian allocation, i.e., there exists a nonzero price vector $p \in \mathbb{R}^\ell_+$ such that $(p, \overline{x})$ is a Walrasian equilibrium.  
\end{theorem}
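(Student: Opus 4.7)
The plan is to follow the classical Debreu–Scarf argument, which combines the equal treatment property with a separation theorem applied to an aggregate set of ``preferred net trades.'' Since $\overline{x}$ is in the core of every $\Econ^n$, in particular for every $n \geq 1$ it is in the $h$-core of $\Econ^n$, so by the equal-treatment lemma we may regard $\overline{x}$ as $(\overline{x}_t)_{t \in [h]} \in \mathbb{R}^{h\ell}_+$ with one bundle per type.

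For each type $t \in [h]$, define the set of strictly preferred net trades
\[
\Gamma_t \coloneqq \{ y - \omega_t : y \in \mathbb{R}^\ell_+, \ u_t(y) > u_t(\overline{x}_t) \},
\]
and let $\Gamma \coloneqq \cvh\!\left( \bigcup_{t=1}^h \Gamma_t \right)$. The central claim is that $\zero \notin \Gamma$. The hard step is to establish this, and it is exactly where the assumption ``core of $\Econ^n$ for all $n$'' is used. Suppose, for contradiction, that $\zero = \sum_t \lambda_t z_t$ with $z_t \in \Gamma_t$, $\lambda_t \geq 0$, $\sum_t \lambda_t = 1$. I would approximate the weights $\lambda_t$ by rationals $k_t/n$ with a common denominator $n$ (assigning $k_t \geq 1$ to every type by continuity and strict monotonicity, shrinking the $z_t$ slightly while keeping $u_t(\omega_t + z_t) > u_t(\overline{x}_t)$). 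Then the coalition in $\Econ^n$ consisting of $k_t$ agents of each type $t$, with each such agent receiving $\omega_t + z_t$, is feasible because
\[
\sum_t k_t (\omega_t + z_t) = \sum_t k_t\, \omega_t + n \sum_t \lambda_t z_t = \sum_t k_t\, \omega_t,
\]
and it blocks $\overline{x}$, contradicting that $\overline{x}$ is in the core of $\Econ^n$. This rational-approximation/rounding step is the main obstacle, and it is precisely the one that the paper later replaces via the approximate Carath\'eodory theorem to avoid needing $n \to \infty$.

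With $\zero \notin \Gamma$, I would apply a separating hyperplane theorem to the convex set $\Gamma + \mathbb{R}^\ell_+$ (which also does not contain $\zero$, using strict monotonicity of the $u_t$'s and convexity of the upper contour sets, guaranteed by strict quasiconcavity): there exists a nonzero $p \in \mathbb{R}^\ell$ with $p^T z \geq 0$ for all $z$ in the closure of $\Gamma + \mathbb{R}^\ell_+$. Adding nonnegative vectors forces $p \geq \zero$, so $p$ can be normalized to lie in $\Delta$. Unpacking the definition of $\Gamma_t$, for every $t$ and every $y \in \mathbb{R}^\ell_+$ with $u_t(y) > u_t(\overline{x}_t)$ we obtain $p^T y \geq p^T \omega_t$.

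Finally, I would upgrade this to the Walrasian conditions. Strict inequality $p^T y > p^T \omega_t$ whenever $u_t(y) > u_t(\overline{x}_t)$ follows by a standard ``cheaper point'' argument: if $p^T y = p^T \omega_t$ held with strict preference, continuity of $u_t$ would let me perturb $y$ slightly downward to $y'$ with $u_t(y') > u_t(\overline{x}_t)$ and $p^T y' < p^T \omega_t$, contradicting the weak separation. For budget exhaustion, monotonicity and continuity give $p^T \overline{x}_t \geq p^T \omega_t$ for each $t$ (take $y = \overline{x}_t + \eta \one$ and let $\eta \downarrow 0$); summing over $t$ and using feasibility $\sum_t \overline{x}_t = \sum_t \omega_t$ forces equality $p^T \overline{x}_t = p^T \omega_t$ for every $t$. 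These two properties together show that $(p, \overline{x})$ is a Walrasian equilibrium, completing the proof.
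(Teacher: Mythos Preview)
The paper does not prove this theorem: it is stated as the classical result of Debreu and Scarf and simply cited, with the paper's own contribution being the quantitative Theorem~\ref{thm:quant-debreu-scarf}. Your sketch is the standard Debreu--Scarf argument (equal treatment, then separation of $\zero$ from the convex hull of preferred net trades, using the ``all $n$'' hypothesis to rule out $\zero\in\Gamma$ via a rational-weights blocking coalition), and as such it is the right approach and essentially correct.

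One genuine imprecision is worth flagging. In your displayed feasibility line you write
\[
\sum_t k_t (\omega_t + z_t) = \sum_t k_t\, \omega_t + n \sum_t \lambda_t z_t,
\]
which tacitly identifies $k_t$ with $n\lambda_t$. After rational approximation these differ, so $\sum_t k_t z_t$ is only close to $\zero$, not equal to it. The actual content of the rounding step is to choose $n$ large and $k_t/n$ close to $\lambda_t$ so that $\sum_t k_t z_t \ll \zero$ (or $\leq \zero$ after a small perturbation of the $z_t$, which is allowed because each $\Gamma_t$ is open by continuity of $u_t$), and then use strict monotonicity to distribute the slack and obtain a feasible blocking $S$-allocation. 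You allude to this (``shrinking the $z_t$ slightly''), but as written the equality is not justified. A second small point: your cheaper-point upgrade from $p^T y \geq p^T\omega_t$ to strict inequality needs $p^T\omega_t>0$; this follows from the standard interiority assumption $\omega_t\gg\zero$ in Debreu--Scarf, which is implicit here but not stated in the theorem as the paper records it.
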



By contrast, our result is nonasymptotic--it shows that as long as $n$ is quadratic in $h$, $\ell$, and $1/\varepsilon$, a core allocation is an $\varepsilon$-Walrasian allocation. In fact, our result holds for $\kappa$-cores, where $\kappa$ is $\mathcal{O}\left( \frac{h^2 \ell}{\varepsilon^2}\right)$.

\begin{theorem}[Main Result]
\label{thm:quant-debreu-scarf}
Assume that in an exchange economy $\Econ = \left( (u_i, \omega_i) \right)_{i \in [h]}$---with $h$ consumers and $\ell$ number of goods---the utilities, $u_i$s, are strictly monotonic, continuously differentiable, and $\alpha$-strongly concave. Furthermore, suppose that an allocation $\overline{x} \in \mathbb{R}^{h\ell}_+$ belongs to the $\kappa$-core of $\Econ^n$, for any
\begin{align*}
n \geq \kappa \geq \frac{16}{\alpha} \left( \frac{\lambda \ell h }{\varepsilon} + \frac{h^2}{\varepsilon^2} \right).
\end{align*}
Then, $\overline{x}$ is an $\varepsilon$-Walrasian allocation (i.e., there exists a price vector $p \in \Delta$ such that $(p, \overline{x})$ is an $\varepsilon$-Walrasian equilibrium). Here, $\varepsilon >0$ is the approximation parameter and $\lambda$ is the Lipschitz constant of the utilities. 
\end{theorem}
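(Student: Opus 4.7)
My plan is to follow the skeleton of the classical Debreu--Scarf core convergence proof but to replace its ``rationalize the convex weights and replicate'' step with the approximate Carath\'eodory theorem, using $\alpha$-strong concavity to convert geometric near-cancellation of preferred trades into a genuine strict block. By the equal-treatment property (Lemma~\ref{lem:equaltreatment}), I may treat $\overline{x}$ as a single bundle $\overline{x}_t$ per type $t\in[h]$. For each type I define the preferred net-trade set
\begin{equation*}
Z_t := \{\, y - \omega_t : y \in \mathbb{R}^\ell_+,\ u_t(y) > u_t(\overline{x}_t)\,\},
\end{equation*}
and I expect Lemma~\ref{lemma:bounded-hull} to characterize $\varepsilon$-Walrasian allocations as precisely those $\overline{x}$ for which $0$ is separated, in a quantitative sense depending on $\varepsilon$, from $\mathrm{conv}\bigl(\bigcup_t Z_t\bigr)$.

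I would argue by contrapositive. Suppose $\overline{x}$ is not an $\varepsilon$-Walrasian allocation; then Lemma~\ref{lemma:bounded-hull} should deliver preferred trades $z_t \in Z_t$ and nonnegative weights $\lambda_t$ with $\sum_t \lambda_t = 1$ such that $\bigl\|\sum_t \lambda_t z_t\bigr\|$ is small. Since the trades live in a ball of bounded radius $r$ around the origin, the approximate Carath\'eodory theorem produces integer multiplicities $(k_t)_{t\in[h]}$ with $\sum_t k_t = \kappa$ and preferred trades $z_t^\star \in Z_t$ such that the normalized excess $\delta := \tfrac{1}{\kappa}\sum_t k_t z_t^\star$ satisfies $\|\delta\| = O(r/\sqrt{\kappa})$. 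I then form the coalition $S$ of size $\kappa \le n$ consisting of $k_t$ copies of each type $t$ (feasible in $\Econ^n$) and hand each member of type $t$ the bundle $\omega_t + z_t^\star - \delta$; this is a valid $S$-allocation because $\sum_{(i,t) \in S}(\omega_t + z_t^\star - \delta) = \sum_{(i,t) \in S}\omega_t$.

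Each $z_t^\star$ is strictly preferred to $\overline{x}_t - \omega_t$, and $\alpha$-strong concavity supplies a quantitative lower bound of order $\alpha \|z_t^\star - (\overline{x}_t - \omega_t)\|^2$ on the utility gain, while the Lipschitz constant $\lambda$ bounds the utility loss from subtracting $\delta$ by $\lambda \|\delta\|$. The threshold $\kappa \ge \tfrac{16}{\alpha}\bigl(\tfrac{\lambda \ell h}{\varepsilon} + \tfrac{h^2}{\varepsilon^2}\bigr)$ is exactly what is needed for the strong-concavity gain to dominate the Lipschitz loss, so every member of $S$ ends up strictly better off than under $\overline{x}$; hence $S$ blocks $\overline{x}$, contradicting the $\kappa$-core hypothesis. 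The step I expect to be the main obstacle is this final calibration: I will have to ensure that the strict-preference margin extracted from Lemma~\ref{lemma:bounded-hull} is quantitative enough (via $\alpha$-strong concavity) to absorb the Carath\'eodory residual $\|\delta\|$ while simultaneously satisfying both the budget-gap tolerance $\varepsilon$ and the supporting-hyperplane tolerance $\varepsilon/h$ that the $\varepsilon$-Walrasian definition demands. This balancing is precisely what gives rise to the two terms in the lower bound on $\kappa$: the $\lambda \ell h/\varepsilon$ term tracks the Lipschitz slack, and the $h^2/\varepsilon^2$ term tracks the Carath\'eodory sample-complexity.
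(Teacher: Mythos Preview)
Your overall architecture---contrapositive via the preferred-trade convex hull, approximate Carath\'eodory to discretize the weights, then build a blocking coalition of size $\kappa$---matches the paper's. But two of the load-bearing steps are misattributed, and as written the argument does not close.

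First, the role of $\alpha$-strong concavity. You invoke it to produce ``a quantitative lower bound of order $\alpha\|z_t^\star - (\overline{x}_t-\omega_t)\|^2$ on the utility gain.'' Strong concavity is an \emph{upper} bound on $u$, not a lower one; it cannot manufacture a utility margin for a point $z_t^\star$ that merely satisfies $u_t(\omega_t+z_t^\star)>u_t(\overline{x}_t)$ strictly. That strict inequality can be arbitrarily small, so when you subtract the Carath\'eodory residual $\delta$ from every bundle the Lipschitz loss $\lambda\|\delta\|$ will in general swamp the gain, and your coalition need not block. In the paper, strong concavity is used for a completely different purpose: Lemma~\ref{lemma:bounded-hull} uses it to show that the preferred trades appearing in the convex combination can be taken inside a ball of radius $\gamma=\sqrt{2(\lambda\ell\delta+1)/\alpha}$. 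That norm bound is what makes the approximate Carath\'eodory error $O(\gamma/\sqrt{\kappa})$ controllable and is the actual source of the $\alpha$-dependence in the threshold for $\kappa$.

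Second, the separation target is not $0$ but the shifted point $(-\delta)\mathbf{1}$ with $\delta=\varepsilon/h$. This shift is exactly how the paper avoids your gain-versus-loss balancing act: one works with the margin sets $P_i^\eta=\{z:u_i(z+\omega_i)\ge u_i(\overline{x}_i)+\eta\}$, shows (Lemma~\ref{lemma:convex-hull}) that $(-\delta)\mathbf{1}\notin\mathrm{cvh}\bigl(\bigcup_i P_i^\eta\bigr)$ for $\kappa$-core allocations, and then lets the separating hyperplane give prices. In the blocking direction, after Carath\'eodory the $k$-uniform combination $\nu'$ is $\delta$-close to $(-\delta)\mathbf{1}$ and hence still componentwise strictly negative; this means the coalition's preferred bundles sum to strictly less than its endowment, and \emph{monotonicity alone} (not a utility-margin calculation) yields the strict block. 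The two terms in the $\kappa$ bound then come from $\kappa\ge 8\gamma^2/\delta^2$, not from trading off Lipschitz slack against a concavity gain.
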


We will establish this theorem by showing that if an allocation $\overline{x}$ is in the $\kappa$-core, then there exists a price vector $p \in \Delta$ such that $(p, \overline{x})$ is a $\varepsilon$-Walrasian equilibrium. It is relevant to note that, instead of relying on limiting arguments, we use nonasymptotic results (with relevant approximation guarantees), such as the approximate version of Carath\'{e}odory's theorem (see~\cite{barman2015approximating} and references therein). Developing such a proof ensures that the underlying parameters (e.g., the core size $\kappa$) are only polynomially large (and not arbitrarily large). This approach not only provides us with a quantitive version of the core convergence theorem, but also an efficiently implementable characterization (Section~\ref{section:bounded-hull}). In Section~\ref{section:testing-walras} we will use the developed characterization to design an efficient algorithm that tests whether a given allocation is approximately Walrasian, or not.

We will begin by establishing (in Section~\ref{section:bounded-hull}) a useful geometric property that is satisfied by all allocations (Walrasian or otherwise). At a high level, this property shows that one can focus on a bounded subset of a specific convex hull, which in itself is unbounded. This bounding exercise essentially enables us to bypass asymptotic arguments, and prove the quantitative version of the core convergence theorem in Section~\ref{section:proof-quant-debreu-scarf}. 

The fact that we can work with a bounded set (and not an unbounded one) is also essential from an algorithmic standpoint. Specifically, it lets us apply the Ellipsoid method and develop (in Section~\ref{section:testing-walras}) an efficient, testing algorithm for Walrasian allocations.

\subsection{Bounded Hull}
\label{section:bounded-hull}
The result developed in this section applies to arbitrary allocations; allocations that might or might not be in the core. Consider an exchange economy $\Econ = \left( (u_i, \omega_i) \right)_{i \in [h]}$---with $h$ consumers and $\ell$ goods---in which the utilities, $u_i$s, are strictly monotonic, continuously differentiable, and $\alpha$-strongly concave. As before, $\lambda$ denotes the Lipschitz constant of the utility functions.  Let $\delta \coloneqq \varepsilon/h$.

Let $\overline{y} = (\overline{y}_i)_{i\in [h]}$ be an allocation in $\Econ$. With parameter $\eta \in (0,1)$, for each consumer $i \in [h]$ we define $V_i^\eta \coloneqq \left\{ y \in \mathbb{R}^\ell_+ \mid u_i(y) \geq u_i(\overline{y}_i) + \eta \right\}$ to denote the upper contour set with respect to the  allocated bundle $\overline{y}_i$ and margin $\eta$. Also, write $Q^\eta_i$ to denote the set of trades from the endowment $\omega_i$ that render consumer $i$ better off than the allocated bundle, $Q^\eta_i \coloneqq \left\{ z \in \mathbb{R}^\ell \mid u_i( z + \omega_i) \geq u_i(\overline{y}_i) + \eta  \right\}$. By definition, $z  \in Q^\eta_i$ iff $(z + \omega_i) \in V^\eta_i$.

For each consumer $i$, we also consider  $\widehat{Q}^\eta_i$, a bounded subset of $Q^\eta_i$; specifically, 
\begin{align*}
\widehat{Q}^\eta_i & \coloneqq Q^\eta_i  \ \cap \ \left\{ z \in \mathbb{R}^\ell \ : \ \| z \|  \leq  \sqrt{\frac{2 ( \lambda \ell \delta + 1)}{\alpha}} \right\},
\end{align*} 

The set $\widehat{Q}^\eta_i$ is obtained by intersecting $Q^\eta_i$ with the Euclidean ball of radius $\sqrt{\frac{2 ( \lambda \ell \delta + 1)}{\alpha}}$ and center $\zero$.

Since the utility $u_i$ is continuous and concave, the set $Q^\eta_i$ is closed and convex. Therefore, the subset $\widehat{Q}^\eta_i$ is compact (closed and bounded) and convex. The following lemma shows that, by bounding the sets in this manner, we do not not loose out on an important containment property.   


\begin{lemma}
\label{lemma:bounded-hull} Let  $\overline{y}$ be an allocation in an economy $\Econ$ with strictly monotonic, continuously differentiable, and strongly concave utilities. Suppose that the sets $Q^\eta_i$ and $\widehat{Q}^\eta_i$, for $i\in [h]$, are as defined above, with parameters $\da>0$ and $\eta\in [0,\bar\eta)$. Then,  
\begin{align*}
(-\delta) \one \in \cvh \left( \bigcup_{i=1}^h Q^\eta_i\right) & \ \ \ \textrm{  iff  } \ \ \ (-\delta) \one \in \cvh \left( \bigcup_{i=1}^h \widehat{Q}^\eta_i\right).
\end{align*}
\end{lemma}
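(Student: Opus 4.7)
The $\Leftarrow$ direction is immediate, since $\widehat{Q}_i^\eta \subseteq Q_i^\eta$ for every $i$ gives $\cvh(\bigcup_i \widehat{Q}_i^\eta) \subseteq \cvh(\bigcup_i Q_i^\eta)$.

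For the $\Rightarrow$ direction I would argue by contraposition using a separating hyperplane. Suppose $(-\delta)\one \notin \cvh(\bigcup_i \widehat{Q}_i^\eta)$. Each $\widehat{Q}_i^\eta$ is compact and convex, hence so is the bounded hull, and separation yields a vector $p \in \mathbb{R}^\ell$ with $\langle p, (-\delta)\one\rangle < \inf_{z \in \widehat{Q}_i^\eta}\langle p, z\rangle$ for every $i$ (using the convention $\inf \emptyset = +\infty$). I want to show this same $p$ also separates $(-\delta)\one$ from $\cvh(\bigcup_i Q_i^\eta)$. First, $p$ must be coordinatewise nonnegative: monotonicity of $u_i$ places $\mathbb{R}^\ell_+$ in the recession cone of every $Q_i^\eta$, so any negative coordinate of $p$ would drive $\inf_{z\in Q_i^\eta}\langle p, z\rangle$ to $-\infty$, which is incompatible with $(-\delta)\one$ lying in $\cvh(\bigcup_i Q_i^\eta)$. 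Hence I may assume $p \geq \zero$.

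With $p \geq \zero$ the task reduces to showing that $\inf_{z \in Q_i^\eta}\langle p, z\rangle$ is attained inside the ball of radius $r = \sqrt{2(\lambda\ell\delta + 1)/\alpha}$, so that it coincides with $\inf_{z \in \widehat{Q}_i^\eta}\langle p, z\rangle$ and the separator already separates from the larger hull. Since $p\geq\zero$ and $u_i$ is strictly monotone, a minimizer $z^\star$ must saturate the upper-contour constraint $u_i(\omega_i + z^\star) = u_i(\overline{y}_i) + \eta$, and a KKT condition identifies $p$ (up to nonnegativity multipliers coming from $\omega_i+z^\star\geq\zero$) with a positive scalar multiple of $\nabla u_i(\omega_i + z^\star)$. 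I would then apply the $\alpha$-strong concavity of $u_i$ at $\omega_i$ evaluated at $\omega_i + z^\star$, combined with the Lipschitz bound $\|\nabla u_i\| \leq \lambda$, the normalization $u_i \in [0,1-\overline\eta)$, and the fact that $p$ lies in (a positive multiple of) the simplex so that $\langle p,(-\delta)\one\rangle$ scales with $\delta$ and the $\ell$ goods. The target is a quantitative inequality of the shape
\[
\frac{\alpha}{2}\|z^\star\|^2 \;\leq\; \lambda\ell\delta + 1,
\]
with the $\lambda\ell\delta$ term produced by bounding the gradient inner product via Lipschitzness against the $\ell\delta$ scale of the target point and the $+1$ slack coming from the utility normalization. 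This yields $\|z^\star\| \leq r$, whence $z^\star \in \widehat Q_i^\eta$ and $\inf_{Q_i^\eta}\langle p, z\rangle = \langle p, z^\star\rangle \geq \inf_{\widehat Q_i^\eta}\langle p, z\rangle > \langle p, (-\delta)\one\rangle$, the desired contradiction.

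The main obstacle is justifying the strong-concavity step: $u_i$ is only assumed to be $\alpha$-strongly concave on $\mathcal R$, and $\mathcal R$ contains only the ball of radius $r$ around $\omega_i$. So one must certify $\omega_i + z^\star \in \mathcal R$ \emph{before} invoking the strong-concavity inequality that bounds $\|z^\star\|$. I would handle this via a short homotopy/continuity argument: start from $\omega_i$, flow along a $p$-decreasing direction into $Q_i^\eta$, and use the calibration $\alpha r^2 \geq 2\lambda\ell\delta + 2$ to guarantee the flow cannot exit the ball of radius $r$ before it reaches a point with $\langle p, \cdot\rangle$ at most the minimum. Everything else in the proof is qualitative (monotonicity, convexity, separation); it is only in this last step that the precise choice of $r$ is essential.
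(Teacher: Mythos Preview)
Your reverse direction is fine. The forward direction has two genuine gaps.

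First, the argument that the separator $p$ must be nonnegative does not work. You assert that $\inf_{z\in Q_i^\eta}\langle p,z\rangle=-\infty$ is ``incompatible with $(-\delta)\one$ lying in $\cvh(\bigcup_i Q_i^\eta)$.'' It is not: membership in that hull only says $(-\delta)\one$ is a convex combination of \emph{finitely many} points of $\bigcup_i Q_i^\eta$, each giving a finite value of $\langle p,\cdot\rangle$; the infimum over the whole unbounded set may still be $-\infty$ without any contradiction. And because $\widehat Q_i^\eta$ is bounded (hence not upward closed), an arbitrary separator of $\cvh(\bigcup_i\widehat Q_i^\eta)$ need not satisfy $p\ge\zero$. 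One can rescue nonnegativity by separating instead from $\cvh(\bigcup_i\widehat Q_i^\eta)+\mathbb{R}^\ell_+$, but then you must first show $(-\delta)\one$ is not in that larger set, which already requires a monotonicity/norm-reduction argument of exactly the kind the paper carries out.

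Second, even granting $p\in\Delta$, your KKT route does not deliver the stated radius. Strong concavity at $\omega_i$ gives $\tfrac{\alpha}{2}\|z^\star\|^2\le 1-\nabla u_i(\omega_i+z^\star)^T z^\star$, and with $p=\mu\,\nabla u_i$ you need $-p^T z^\star/\mu\le\lambda\ell\delta$. But the only a priori lower bound on $p^T z^\star$ comes from $z^\star\ge-\omega_i$, yielding a radius that depends on $\|\omega_i\|$, not on $\delta$; the ``$\ell\delta$ scale of the target point'' never enters the KKT system at $z^\star$. So at best you prove a different (and for the downstream application, looser) version of the lemma.

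The paper proceeds differently and more directly. It takes a convex representation $\sum_i\lambda_i z_i=(-\delta)\one$ with $z_i\in Q_i^\eta$, and among all representations with $\sum_i\lambda_i z_i\le(-\delta)\one$ selects one minimizing $\max_i\|z_i-(-\delta)\one\|$. If some extremal $z_i^*$ had $\|z_i^*\|$ exceeding the radius, then strong concavity at $x_i^*\coloneqq z_i^*+\omega_i$, combined with the normalization $u_i(x_i^*)<1$ and the Lipschitz bound $\|\nabla u_i\|_1\le\lambda\ell$, gives $\nabla u_i(x_i^*)^T x_i^*<\nabla u_i(x_i^*)^T(\omega_i+(-\delta)\one)$. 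Hence a small convex move of $z_i^*$ toward $(-\delta)\one$ keeps it in $Q_i^\eta$ while strictly decreasing $\|z_i^*-(-\delta)\one\|$, contradicting extremality. The $\delta$ in the radius arises precisely from comparing the gradient against the specific point $(-\delta)\one$ in this inequality, which is the mechanism your separation/KKT scheme lacks.
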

\begin{proof}
The reverse direction of the claim is direct, since $\widehat{Q}^\eta_i \subset Q^\eta_i$ for all $i \in [h]$. 

For the forward direction, we have vectors $z_i \in Q^\eta_i$ and a convex combination $\lambda_i \geq 0$, for $i \in [h]$, such that $\sum_{i=1}^h \lambda_i = 1$ and
\begin{align}
\sum_{i=1}^h \lambda_i z_i = (-\delta) \one \label{eq:comb}
\end{align}

Let $R :=\max_i \{\| z_i \| :i\in [h]\}$. By definition, the $z_i$s are contained in the (closed) Euclidean ball $B(R)$ of radius $R$ and center $\zero$. 
Note that, for each $i \in [h]$, the intersection $Q^\eta_i \cap B(R)$ is a compact set. 

Let $Z$ denote the collection of all tuples \[(z'_1, z'_2, \ldots, z'_h) \in \left(Q^\eta_1 \cap B(R) \right) \times \left( Q^\eta_2 \cap B(R) \right) \times \ldots \times \left( Q^\eta_h \cap B(R)\right)\] for which there exists exists convex coefficients $\lambda'_i$s such that $\sum_i \lambda'_i z'_i \leq (-\delta) \one$, i.e., there exists a convex combination of $z'_i$s which is component-wise upper bounded by $(-\delta) \one$.\footnote{The definition of $Q_i^\eta$s provide a component-wise lower bound as well: $z_i \geq - \omega_i$, for each vector $z_i \in Q^\eta_i$.}

From~\eqref{eq:comb}, we know that $Z$ is nonempty. Given that the sets $(Q^\eta_i \cap B(R))$s are compact, one can show that $Z$ is compact as well. 
Hence, the problem of minimizing $\max\left\{ \|z_i' -  (-\delta) \one \| \ \mid (z'_i)_i \in Z \right\}$ admits an optimal solution, say $(z^*_t)_t$. Note that, by definition of $Z$, there exists convex coefficients $(\lambda^*_t)_{t \in H^*}$ that satisfy $\sum_{ t \in H^*} \lambda^*_t z^*_t \leq (-\delta) \one$;  
here, subset $H^* \subseteq [h]$ is selected to ensure that $\lambda^*_t > 0 $ for all $t \in H^*$. 
 
Next, we will prove that $\| z^*_t \| \leq \sqrt{\frac{2 ( \lambda \ell \delta + 1)}{\alpha}} $, for all $t \in H^*$. Subsequently, we will show that using $z^*_t$s we can obtain vectors $\widetilde{z}_t \in Q^\eta_t$ that satisfy the same norm bound ($\| \widetilde{z}_t \| \leq \sqrt{\frac{2 ( \lambda \ell \delta + 1)}{\alpha}} $) and whose convex combination is equal to $(-\delta) \one$. This norm bound implies that $\widetilde{z}_t \in \widehat{Q}^\eta_t$ and, hence, leads to the desired containment: $(-\delta) \one \in \cvh \left( \bigcup_{i=1}^h \widehat{Q}^\eta_i\right)$. 

Assume, by way of contradiction, that $\| z^*_i \| >  \sqrt{\frac{2 ( \lambda \ell \delta + 1)}{\alpha}}$, for some $i \in H^*$. For such an $i$, consider bundle $x^*_i \coloneqq z^*_i + \omega_i$. By definition, $z^*_i \in Q^\eta_i$ and, hence, $x^*_i \in V^\eta_i$ (i.e., $x^*_i$ belongs to the upper contour set). Given that $u_i$ is $\alpha$-strongly concave, we have\footnote{Note that, here the desired bound holds even if $x^*_i$ lies outside the range set $\mathcal{R}$ of strong concavity (see Section~\ref{section:notation}). Specifically, we can consider a convex combination of $\omega_i$ and $x^*_i$, say vector $\widetilde{x}_i$, that is at a distance $r$ (as specified in Section~\ref{section:notation}) away from $\omega_i$. The assumption on $\mathcal{R}$ ensures that $\widetilde{x}_i \in \mathcal{R}$ and $\frac{\alpha r^2}{2} \geq \lambda \delta \ell + 1$. Applying strong concavity with $\widetilde{x}_i$ we get $u_i(\omega_i) \leq u_i(\widetilde{x}_i) + \nabla u_i(\widetilde{x}_i)^T( \omega_i - \widetilde{x}_i) - \frac{\alpha}{2} \| \omega_i - \widetilde{x}_i \|^2 \leq  u_i(\widetilde{x}_i) + \nabla u_i(\widetilde{x}_i)^T( \omega_i - \widetilde{x}_i) - (\lambda \ell \delta + 1)$. Since $u_i(x^*_i) + \nabla u_i(x^*_i)^T( \omega_i - x^*_i) \geq u_i(\widetilde{x}_i) + \nabla u_i(\widetilde{x}_i)^T( \omega_i - \widetilde{x}_i)$, inequality (\ref{ineq:interim}) follows.}
  
\begin{align}
u_i(\omega_i) & \leq u_i(x^*_i) + \nabla u_i(x^*_i)^T( \omega_i - x^*_i) - \frac{\alpha}{2} \| \omega_i - x^*_i \|^2 \label{ineq:defn-strong-concave}
\end{align}
Since $\| \omega_i - x^*_i \| = \| z^*_i \| > \sqrt{\frac{2 ( \lambda \ell \delta + 1)}{\alpha}}$, inequality (\ref{ineq:defn-strong-concave}) reduces to 
\begin{align}
\nabla u_i(x^*_i)^T x^*_i < \nabla u_i(x^*_i)^T \omega_i + \left( u_i(x^*_i) - u_i(\omega_i) \right) - \frac{\alpha}{2} \frac{2 ( \lambda \ell \delta + 1)}{\alpha} \label{ineq:interim}
\end{align}

Now observe that $x^*_i$ must satisfy $u_i(x^*_i)=u_i(\bar y_i)+\eta < 1$: if this is not the case (i.e., we have $u_i(x^*_i) > u_i(\bar y_i)+\eta$), then by reducing a positive component\footnote{If all the components of $z^*_i$ are negative, then $x^*_i \leq \omega_i$ and we get the desired bound $u_i(x^*_i) \leq u_i(\omega_i) \leq 1 - \bar{\eta}$. Here, the last inequality follows from the fact that the utility of any feasible allocation (and, hence, for $\omega_i$) is normalized to be at most $1 - \bar{\eta}$} of $z^*_i = x^*_i - \omega_i$ we can ensure that $z^*_i$ moves closer to $(-\delta) \one$ and at the same time $z^*_i$ continues to be in $Q^\eta_i$ (i.e., the following inequality continues to hold $u_i(x^*_i) \geq u_i(\bar y_i)+\eta$). Also, note that such a reduction maintains the containment of $z^*_i$s in $Z$; specifically, the following inequality continues to hold $\sum_{ t } \lambda^*_t z^*_t \leq (-\delta) \one$. A repeated application of this argument gives us $x^*_i = z^*_i + \omega_i$ with the property that $u_i(x^*_i)=u_i(\bar y_i)+\eta < 1$; here, we have the last inequality since $\eta<\bar\eta$.\footnote{Recall that the utility of any feasible allocation (and, hence, for $\overline{y}_i$) is normalized to be at most $1 - \bar{\eta}$.}


The utility function $u_i$ is $\lambda$-Lipschitz, hence, its gradient at any point $x^*_i \in \mathbb{R}^\ell$ satisfies $\| \nabla u_i(x^*_i) \|_\infty \leq \lambda$. That is, $\| \nabla u_i(x^*_i) \|_1 \leq \lambda \ell$. Therefore, 
\begin{align}
\nabla u_i(x^*_i)^T x^*_i & < \nabla u_i(x^*_i)^T \left( \omega_i +  (-\delta) \one \right) \label{ineq:gradient-drop}
\end{align}

Now, we can apply Proposition~\ref{prop:gradient} (stated and proved below) with $x = x^*_i$, $w = (\omega_i +  (-\delta) \one)$, and inequality (\ref{ineq:gradient-drop}), to establish that there exists a positive $\mu \in (0,1]$ such that $u_i (x^*_i) \leq u_i\left( (1-\mu) x^*_i + \mu \left(  \omega_i +  (-\delta) \one \right)  \right)$.

Rewriting, we get $u_i\left( (1- \mu) ( x^*_i - \omega_i) + \mu (-\delta) \one + \omega_i \right) \geq u_i(x^*_i) $. Therefore, \[\widehat{z}_i \coloneqq (1- \mu) ( x^*_i - \omega_i) + \mu (-\delta) \one = (1- \mu)  z^*_i + \mu (-\delta) \one \in Q^\eta_i,\] as $x^*_i \in V^\eta_i$ and $z^*_i \in Q^\eta_i$. 


The vector $\widehat{z}_i$ is itself a convex combination of $z^*_i$ and $(-\delta) \one$. This ensures that there exists a convex combination, $(\widehat{\lambda}_t)_{t \in H^*}$, such that 
\begin{align*}
\widehat{\lambda}_i \widehat{z}_i + \sum_{t \in H^* \setminus \{ i \}} \widehat{\lambda}_t z^*_t  \leq (-\delta) \one.
\end{align*}

Note that $\|\widehat{z}_i - (-\delta) \one \|  = (1 - \mu) \| z^*_i - (-\delta) \one \| < \| z^*_i - (-\delta) \one\|$. This argument can be repeated for all other $j$s with $\| z^*_{j} - (-\delta) \one \| = \max\{ \| z^*_t - (-\delta) \one \| \ \mid \ t \in H^* \}$, contradicting the definition (optimality) of $z^*_t$s. 

Therefore, the desired upper bound on the norm holds: $\| z^*_t \| \leq \sqrt{\frac{2 ( \lambda \ell \delta + 1)}{\alpha}} $, for all $t \in H^*$. 

To complete the proof we will show that $z^*_t$s can be transformed into vectors $\widetilde{z}_t \in Q^\eta_t$ that satisfy the same norm bound and whose convex combination is equal to $(-\delta) \one$. Write $\varphi \coloneqq \sum_t \lambda^*_t z^*_t $ and note that $\varphi \leq (-\delta) \one$. If component $a \in [\ell]$ of $\varphi$ is strictly less than $-\delta$, then there exists a $z^*_i$ such that its $a$th component is less than $-\delta$ as well: $z^*_{i, a} < -\delta$. We can increase $z^*_{i,a}$ till either it becomes equal to zero, or the $a$th component of $\varphi$ reaches $-\delta$.\footnote{Here, the convex coefficients, $\lambda^*_t$s, remain unchanged.} Note that in this transformation while the $a$th component of $z^*_i$ increases in value, it decreases in magnitude. Hence, the utility $u_i(z^*_i + \omega_i)$ increases and the norm of $z^*_i$ decreases. Repeatedly applying this procedure gives us vectors $\widetilde{z}_t \in Q^\eta_t$ such that $\| \widetilde{z}_t \| \leq \Lambda$ and $\sum_t \lambda^*_t \widetilde{z}_t  = (-\delta) \one$. 
 
Overall, this implies that $(-\delta) \one \in \cvh \left( \bigcup_{i=1}^h \widetilde{Q}_i\right)$ and the stated claim follows. 
\end{proof}

The following observation was used in the proof of Lemma~\ref{lemma:bounded-hull}. 

\begin{proposition}
\label{prop:gradient}
For a continuously differentiable and concave function $u: \mathbb{R}^\ell \mapsto \mathbb{R}$, let vector $x, w \in \mathbb{R}^\ell$ satisfy the inequality $\nabla u(x)^T x < \nabla u(x)^T w$. Then, there exists a positive $\mu \in (0,1]$ such that $u(x) \leq u\left( (1-\mu) x + \mu w\right)$.  
\end{proposition}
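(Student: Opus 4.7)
The plan is to reduce the claim to a one-dimensional directional derivative calculation along the segment connecting $x$ to $w$. Concretely, I would define the auxiliary function $f \colon [0,1] \to \mathbb{R}$ by
\[
f(\mu) \;\coloneqq\; u\bigl((1-\mu) x + \mu w\bigr),
\]
so that $f(0) = u(x)$ and the target inequality $u(x) \leq u((1-\mu) x + \mu w)$ becomes $f(0) \leq f(\mu)$. Since $u$ is assumed continuously differentiable, the chain rule gives that $f$ is continuously differentiable on $[0,1]$ with derivative
\[
f'(\mu) \;=\; \nabla u\bigl((1-\mu) x + \mu w\bigr)^T (w - x).
\]

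The key observation is that the hypothesis of the proposition is precisely a statement about $f'(0)$. Indeed,
\[
f'(0) \;=\; \nabla u(x)^T (w - x) \;=\; \nabla u(x)^T w - \nabla u(x)^T x \;>\; 0,
\]
where the strict inequality is exactly the assumed condition $\nabla u(x)^T x < \nabla u(x)^T w$. Since $\nabla u$ is continuous and $(1-\mu) x + \mu w$ depends continuously on $\mu$, the derivative $f'$ is continuous in $\mu$, so there exists some $\mu_0 \in (0,1]$ such that $f'(s) > 0$ for every $s \in [0, \mu_0]$.

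Finally, for any $\mu \in (0, \mu_0]$, the fundamental theorem of calculus yields
\[
f(\mu) - f(0) \;=\; \int_0^\mu f'(s)\, ds \;>\; 0,
\]
which is the desired inequality $u(x) = f(0) \leq f(\mu) = u((1-\mu) x + \mu w)$. Note that the statement as written does not really need concavity of $u$; the argument goes through for any continuously differentiable $u$ satisfying the strict gradient inequality, and this is all that the application in Lemma~\ref{lemma:bounded-hull} requires. The proof is almost entirely mechanical, and there is no genuine obstacle beyond carefully invoking continuity of $\nabla u$ to pass from the pointwise positivity of $f'(0)$ to positivity on a small interval.
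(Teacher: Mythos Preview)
Your proof is correct. Both your argument and the paper's hinge on the same observation: by continuity of $\nabla u$, the directional derivative $\nabla u(x_\mu)^T(w-x)$ remains strictly positive for all sufficiently small $\mu > 0$, where $x_\mu \coloneqq (1-\mu)x + \mu w$. You package this as $f'(\mu) > 0$ on an interval and conclude via the fundamental theorem of calculus; the paper instead invokes the concavity inequality $u(x) \leq u(x_\mu) + \nabla u(x_\mu)^T(x - x_\mu) = u(x_\mu) - \mu\,\nabla u(x_\mu)^T(w-x)$ and reads off the conclusion once the last term is shown to be nonpositive. Your route is slightly more elementary and, as you correctly note, does not use concavity at all---the statement holds for any continuously differentiable $u$.
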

\begin{proof}
Let $\zeta \coloneqq \nabla u(x)^T (w - x) >0$. Also, for parameter $\mu \in (0,1]$, denote by $x_\mu$ the convex combination $(1-\mu) x + \mu w$. As $\mu$ tends to zero, $x_\mu$ tends to $x$. 

Given that $u$ is concave, 
\begin{align}
u(x) & \leq u(x_\mu) + \nabla u(x_\mu)^T ( x - x_\mu) \nonumber \\ 
& = u(x_\mu) - \mu \nabla u(x_\mu)^T ( w - x) \label{ineq:grad}
\end{align}

The Cauchy-Schwarz inequality gives us $| \langle \nabla u(x_\mu) - \nabla u(x),  (w - x) \rangle | \leq \| \nabla u(x_\mu) - \nabla u(x)  \|  \cdot \| w -x\| $. Since $u$ is continuously differentiable, there exists a small enough, but positive, $\mu$ such that the right-hand-side of the previous inequality is strictly less than $\zeta$. For such a $\mu >0$ we have $\nabla u(x_\mu)^T ( w - x) > 0$. Therefore, using inequality (\ref{ineq:grad}), we get that there exists a $\mu \in (0,1]$ for which $u(x) \leq u\left( (1-\mu) x + \mu w\right)$.

\end{proof}

\subsection{Lemma for $\kappa$-Core Allocations}
Recall the notation from Theorem~\ref{thm:quant-debreu-scarf} and write $\delta \coloneqq \frac{\varepsilon}{h}$ along with $\gamma \coloneqq \sqrt{\frac{2 ( \lambda \ell \delta + 1)}{\alpha}}$. Note that, in Theorem~\ref{thm:quant-debreu-scarf}, the lower bound on $\kappa$ is equal to $\frac{8 \gamma^2}{\delta^2}$. 

In this section we apply Lemma~\ref{lemma:bounded-hull} to establish an important separation property satisfied by allocations in the $\kappa$-core of an economy. In contrast to the result developed in the previous section, the next lemma (Lemma~\ref{lemma:convex-hull}) specifically addresses $\kappa$-core allocations.  

In particular, given a $\kappa$-core allocation $\overline{x} =(\overline{x}_i)_{i \in [h]} \in \mathbb{R}^{h\ell}_+$, let $\bar{\eta} > 0 $ be such that $u_i(\overline{x}_i) \in [0, 1 - \bar{\eta}]$ for all $i \in [h]$--the normalization of the utilities (to lie in $[0,1)$) ensures that such a $\bar{\eta}$ exists. Consider an arbitrarily small, but positive, parameter $\eta \in (0, \bar{\eta})$. 

For each consumer $i \in [h]$, write $U_i^\eta \coloneqq \left\{ x \in \mathbb{R}^\ell_+ \mid u_i(x) \geq u_i(\overline{x}_i) + \eta \right\}$ to denote the upper contour set with respect to the  allocated bundle $\overline{x}_i$ and margin $\eta$. Also, write $P^\eta_i$ to denote the set of trades from the endowment $\omega_i$ that render consumer $i$ better off than the allocated bundle, $P^\eta_i \coloneqq \left\{ z \in \mathbb{R}^\ell \mid u_i( z + \omega_i) \geq u_i(\overline{x}_i) + \eta  \right\}$. By definition, $z  \in P^\eta_i$ iff $(z + \omega_i) \in U^\eta_i$. 


The next lemma provides an important characterization in terms of the convex hull of the $P^\eta_i$s. This lemma shows that under a $\kappa$-core allocation beneficial trades (i.e., vectors in $P^\eta_i$s) cannot be combined (as a convex combination) across consumers to obtain $(-\delta) \one$; recall that $\delta \coloneqq \varepsilon/h$. At a high level, this corresponds to the fact that, under a $\kappa$-core allocation, mutually beneficial redistributions are not possible across consumers. 

\begin{lemma}
\label{lemma:convex-hull}
For any allocation $\overline{x}=(\overline{x}_i)_{i \in [h]}$ that belongs to the $\kappa$-core of an economy $\Econ^n$,  we have
\begin{align*}
\left(-\delta \right) \one \notin \cvh \left( \bigcup_{i=1}^h P^\eta_i\right).
\end{align*}
\end{lemma}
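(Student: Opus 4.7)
The plan is to argue by contradiction: suppose that $(-\delta)\one \in \cvh\left(\bigcup_{i=1}^h P^\eta_i\right)$, and exhibit a coalition of size at most $\kappa$ that blocks $\overline{x}$ in $\Econ^n$, contradicting the hypothesis that $\overline{x}$ lies in the $\kappa$-core. The starting observation is that the sets $P^\eta_i$ and $\widehat{P}^\eta_i \coloneqq P^\eta_i \cap \{z : \|z\|\leq \gamma\}$ play exactly the role of $Q^\eta_i$ and $\widehat{Q}^\eta_i$ in Lemma~\ref{lemma:bounded-hull}, so under the assumption we immediately obtain $(-\delta)\one \in \cvh\left(\bigcup_i \widehat{P}^\eta_i\right)$, with every vector in this union having Euclidean norm at most $\gamma$.

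First, I would apply the approximate Carath\'eodory theorem of~\cite{barman2015approximating} to the finite-radius set $\bigcup_i \widehat{P}^\eta_i$ and the target point $(-\delta)\one$. With $k \coloneqq \kappa \ge 8\gamma^2/\delta^2$, this produces vectors $a_1,\dots,a_k$ (with repetition) drawn from $\bigcup_i \widehat{P}^\eta_i$ such that $\bigl\| \tfrac{1}{k}\sum_{j=1}^k a_j - (-\delta)\one \bigr\| \le \delta$. Grouping the $a_j$'s by the type $i$ they belong to and averaging within each type (which is legal because each $\widehat{P}^\eta_i$ is convex) yields, for each $i$, a nonnegative integer multiplicity $m_i$ with $\sum_i m_i = k$ and a vector $\widetilde{z}_i \in \widehat{P}^\eta_i$ such that
\begin{align*}
\varphi \;\coloneqq\; \tfrac{1}{k}\sum_{i=1}^h m_i\,\widetilde{z}_i - (-\delta)\one, \qquad \|\varphi\|_2 \leq \delta.
\end{align*}

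Next I would construct the blocking coalition. Pick $m_i$ distinct copies of type $i$ in $\Econ^n$ (feasible because $m_i \le k \le \kappa \le n$), for a coalition $S$ of size $k \le \kappa$. Assign to every copy of type $i$ the bundle
\begin{align*}
y_i \;\coloneqq\; \omega_i + \widetilde{z}_i + \delta\one - \varphi.
\end{align*}
Summing over $S$, the $\delta\one - \varphi$ corrections and the $\widetilde{z}_i$ aggregate exactly so that $\sum_{(j,i)\in S} y_i = \sum_{(j,i)\in S}\omega_i$, so $(y_i)_{(j,i)\in S}$ is an $S$-allocation. Since $\|\varphi\|_\infty \le \|\varphi\|_2 \le \delta$, the correction $\delta\one-\varphi$ is componentwise nonnegative, and strict monotonicity of $u_i$ combined with $\widetilde{z}_i \in P^\eta_i$ yields
\begin{align*}
u_i(y_i) \;\geq\; u_i(\omega_i + \widetilde{z}_i) \;\geq\; u_i(\overline{x}_i) + \eta \;>\; u_i(\overline{x}_i),
\end{align*}
so $S$ strictly blocks $\overline{x}$, contradicting the $\kappa$-core assumption.

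I expect the main obstacle to be tracking the constants so that the two error sources compose correctly: the approximate-Carath\'eodory residual $\varphi$ must be forced below $\delta$ in sup-norm, which is exactly what the radius bound $\gamma = \sqrt{2(\lambda\ell\delta+1)/\alpha}$ from Lemma~\ref{lemma:bounded-hull} makes possible for $\kappa \ge 8\gamma^2/\delta^2$. The other delicate point is ensuring the final coalition is integer-valued and has size at most $\kappa$; this is handled by taking the coalition size to be exactly $k = \kappa$ so that the multiplicities $m_i$ emerging from the approximate Carath\'eodory selection are automatically integers summing to $\kappa$, and no rational-rounding step (as in the original Debreu--Scarf argument) is needed.
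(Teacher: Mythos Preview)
Your proposal is correct and follows essentially the same route as the paper: contradiction, Lemma~\ref{lemma:bounded-hull} to pass to norm-bounded vectors, approximate Carath\'eodory to obtain an integer-weighted combination of size $k\le\kappa$, and then a blocking coalition of that size. The only cosmetic difference is that you explicitly add the nonnegative correction $\delta\one-\varphi$ to obtain an exact $S$-allocation, whereas the paper shows $\sum_{(i,t)\in S} y^*_{i,t} \ll \sum_{(i,t)\in S}\omega_{i,t}$ and leaves the distribution of the surplus implicit (via strict monotonicity); both are equivalent and the $\eta$-margin makes the strict improvement go through in either version.
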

\begin{proof}
Write $\nu \coloneqq (-\delta) \one$. Suppose, towards a contradiction, that $\nu \in \cvh \left( \bigcup_{i=1}^h P^\eta_i\right)$. Then, applying Lemma~\ref{lemma:bounded-hull}, we get that there exist vectors $z^*_t \in P^\eta_t$ such that their norm is bounded, $\| z^*_t \| \leq \sqrt{\frac{2 ( \lambda \ell \delta + 1)}{\alpha}}$, and their convex hull contains $\nu$: 
\begin{align*}
\nu \in \cvh \ (z^*_t)_{t \in H^*} 
\end{align*}
Here, $H^*$ is a subset of $[h]$ with the property that all $t$ in $H^*$ have strictly positive weight in the convex combination that yields $\nu$. 

Therefore, via the approximate version of Carath\'{e}odory's theorem (see, in particular, Theorem 2 in~\cite{barman2015approximating}), we get that, for any integer $k \geq \frac{8 \gamma^2}{\delta^2}$, there exists a vector $\nu' \in \cvh \ (z^*_t)_{t \in H^*}$ which is $\delta$ close to $\nu$ (i.e., $\| \nu' - \nu \|  < \delta$) and it satisfies   
\begin{align*}
\nu'  = \sum_{t \in H^*} \frac{\beta_t}{k} z^*_t,
\end{align*}
with integers $\beta_t \in \mathbb{Z}_+$ summing up to $k$: $\sum_{t \in H^*} \beta_t = k$.

Choose an integer $k \geq \frac{8 \gamma^2}{\delta^2}$ with $k\leq \kappa$. Then, $\nu'$ is $\delta$-close to $\nu$ and it is a $k$-uniform convex combination of the $z^*_t$s. Given that the Euclidean distance between $\nu'$ and $\nu= (-\delta) \one $ is strictly smaller than $\delta$,  the vector $\nu'$ is componentwise strictly negative: $\nu' \ll \zero$. 

Now, consider a coalition $S$ in $\Econ^n$ defined by including $\beta_t \in \mathbb{Z}_+$ copies of consumers of type $t \in H^*$. Note that $|S| = k \leq \kappa$. 

For each member of $S$ of type $t$, let consumption bundle $y^*_{i, t} \coloneqq z^*_t + \omega_t$. Since $z^*_t \in P^\eta_t$, we have 
\begin{align*}
u_t( y^*_{i,t}) \geq u_t(\overline{x}_t) + \eta \qquad \text{ for all } (i, t) \in S.
\end{align*}
Also, the fact that $\nu'$ is a $k$-uniform convex combination of the $z_t^*$s gives us 
\begin{align*}
\frac{1}{k} \left( \sum_{(i,t) \in S} y^*_{i, t} - \sum_{(i,t) \in S} \omega_t \right) = \nu' \ll \zero
\end{align*}

Therefore, $\sum_{(i,t) \in S} y^*_{i, t} \ll  \sum_{(i,t) \in S} \omega_t = \sum_{(i,t) \in S} \omega_{i, t}$. As a consequence, the coalition $S$, which has cardinality $k \leq \kappa$, blocks the given allocation $\overline{x} = (\overline{x}_i)_{i \in [h]}$, in the replica economy $\Econ^n$. This contradicts that $\overline{x}$ belongs to the $\kappa$-core of $\Econ^n$.  
\end{proof}

\subsection{Proof of Theorem~\ref{thm:quant-debreu-scarf}}
\label{section:proof-quant-debreu-scarf}
For the given $\kappa$-core allocation $\overline{x} =(\overline{x}_i)_{i \in [h]} \in \mathbb{R}^{h\ell}_+$ of the economy $\Econ^n$, we have $U^\eta_i \coloneqq \{x \in \mathbb{R}^\ell_+ \mid u_i(x) \geq u_i(\overline{x}_i) + \eta \} $ and $P^\eta_i \coloneqq \{ z \in \mathbb{R}^\ell \mid u_i(z + \omega_i) \geq u_i(\overline{x}_i) \}$, for each $i \in [h]$.

Lemma \ref{lemma:convex-hull} guarantees that $\left(-\delta \right) \one \notin \cvh \left( \bigcup_{i=1}^h P^\eta_i\right)$. Hence, there exists a hyperplane between $ \left(-\delta \right) \one$ and the convex hull; an implication of the hyperplane separation theorem.

In particular, let $p \in \mathbb{R}^\ell$ specify such a separating hyperplane: $p \cdot \left( (-\delta) \one \right) \leq p \cdot  \cvh \left( \bigcup_{i=1}^h P^\eta_i\right)$. The upward closure of $P^\eta_i$s (a consequence of strict monotonicity of the consumers' utility functions), and $p\neq 0$, ensures that $ p > \zero$. Hence, by scaling, we can assume that $p \in \Delta$. 

We will show that $p$---as a price vector---certifies that $\overline{x}$ is an $\varepsilon$-Walrasian allocation, i.e., $(p, \overline{x})$ is an $\varepsilon$-Walrasian equilibrium. 

Recall that, in an exchange economy $\Econ= \left( (u_i, \omega_i) \right)_{i \in [h]}$, a pair $(q, \overline{y}) \in \Delta \times \mathbb{R}^{h \ell}_+$ (with price vector $q$ and allocation $\overline{y}$) is deemed to be an $\varepsilon$-Walrasian equilibrium iff the following two conditions hold for every consumer $i$: (i) $|q^T \overline{y}_i - q^T \omega_i| \leq \varepsilon$ and (ii) for any bundle $x$, with $u_i(x) > u_i (\overline{y}_i)$,  we have $q^T x > q^T \omega_i - \varepsilon/h$. To complete the proof, we will show that $(p, \overline{x})$ satisfies these conditions.

Note that, for each $i \in [h]$ and any $z = (x - \omega_i) \in P^\eta_i$, the separation by $p \in \Delta$ implies  
\begin{align*}
p^T (x - \omega_i) = p^T z \geq p^T \left( (-\delta) \one \right) = -\delta.
\end{align*}
Hence, for any bundle $x \in U^\eta_i$ (i.e., for any bundle $x$ that satisfies $u_i(x) \geq u_i(\overline{x}_i) + \eta$) the expenditure is at least the income (minus $\delta$): $p^T x \geq p^T \omega_i - \delta$; recall that $\delta = \varepsilon/h$. 

Here, the analysis holds for any $\eta>0$, however small. That is, for a bundle $x \in \mathbb{R}^\ell_+$, with the property that, $u_i(x) > u_i(\overline{x}_i)$, we have 
\begin{align}
p^T x \geq p^T \omega_i - \delta \label{ineq:Walras-two}
\end{align}
Therefore, $(p, \overline{x})$ satisfies the second condition in the definition of an $\varepsilon$-Walrasian equilibrium.  

Finally, we will show that even under the allocated bundle $\overline{x}_i$, the expenditure is close to the income. Using the continuity of the utilities and a small enough $\eta$, we can apply inequality (\ref{ineq:Walras-two}) to obtain $p^T \overline{x}_j \geq p^T \omega_j - \delta$ for all consumers $j \in [h]$. 

Allocation $\overline{x}=(\overline{x}_j)_{j \in [h]}$ satisfies the equal treatment property (Lemma~\ref{lem:equaltreatment}), hence $\sum_{j \in [h]} \overline{x}_j = \sum_{j \in [h]} \omega_j$. Therefore, for each consumer $i$, we have $(\overline{x}_i - \omega_i) = \sum_{j \in [h] \setminus \{i\}} (\omega_j - \overline{x}_j)$. Taking inner product with $p \in \Delta$, we get that $(p, \overline{x})$ satisfies the first condition in the definition of an $\varepsilon$-Walrasian equilibrium as well:
\begin{align*}
p^T ( \overline{x}_i - \omega_i) = \sum_{j \in [h] \setminus \{i\}} p^T (\omega_j - \overline{x}_j) \leq (h - 1) \delta < \varepsilon.
\end{align*}
 
Overall, we get that $(p, \overline{x})$ is an $\varepsilon$-Walrasian equilibrium and this completes the proof.

\section{Testing Algorithm for Walrasian Allocations}
\label{section:testing-walras}

This section develops a polynomial-time algorithm that efficiently determines whether a given allocation is an $\varepsilon$-Walrasian allocation, or not. Specifically, given an exchange economy\footnote{Our algorithm only requires oracle access to the underlying utilities $u_i$s and their gradients. In particular, our algorithmic result will hold even in the absence of an explicit (say, a closed form) description of the utility functions.} $\Econ = \left( (u_i, \omega_i) \right)_{i \in [h]}$---with $h$ consumers and $\ell$ goods---along with an allocation $\overline{y}=(\overline{y}_i)_{i \in [h]} \in \mathbb{R}^{h \ell}$, the developed algorithm efficiently finds a price vector $p \in \Delta$ (if one exists) such that $(p, \overline{y})$ is an $\varepsilon$-Walrasian equilibrium in $\Econ$. If no such price vector exists (i.e., $\overline{y}$ is not an $\varepsilon$-Walrasian allocation), then the algorithm correctly reports as such. 

The developed algorithm runs in time that is polynomial in the number of consumers $h$. The algorithm applies, in particular, to completely heterogeneous economies, in which all the consumers can be of different type. In other words, our algorithmic results are not confined to the replica-economy framework.     
 

The testing algorithm builds upon Lemma~\ref{lemma:bounded-hull}. As before, for any given allocation $\overline{y}=(\overline{y}_i)_{i \in [h]}$ (which might or might not be approximately Walrasian) and each consumer $i \in [h]$, we define the set $Q_i \coloneqq \{ z \in \mathbb{R}^\ell \mid u_i(z + \omega_i) \geq u_i(\overline{y_i}) \}$.\footnote{That is, in terms of the notation developed in Section~\ref{section:bounded-hull}, we are considering $Q^\eta_i$ with $\eta = 0$.} We will prove (in Lemma~\ref{lemma:non-containment} below) that the (non) containment of $(-\delta) \one$ in the convex hull of the $Q_i$s characterizes Walrasian equilibria; recall that $\delta \coloneqq \varepsilon/h$. 

This geometric characterization is interesting in its own right. However, given that the sets $Q_i$s are unbounded, this characterization (in terms of $Q_i$s), does not, in and of itself, translate into an efficient testing algorithm. Specifically, in order to apply the Ellipsoid method and test whether a vector is contained in a specific convex set, one requires the set to be bounded. The quantitive treatment developed in this paper---in particular, Lemma~\ref{lemma:bounded-hull}---enables us to bypass this issue. Specifically, we will show that it suffices to work with the a bounded subset of $Q_i$. Towards that end, we define
\begin{align}
\widehat{Q}_i & \coloneqq Q_i  \ \cap \ \left\{ z \in \mathbb{R}^\ell \ : \ \| z \| \leq  \sqrt{\frac{2 ( \lambda \ell \delta + 1)}{\alpha}}  \right\} \label{eq:definition}
\end{align} 
Here, $\alpha$ is the strong-concavity parameter of the utilities and $\lambda$ is their Lipschitz constant. 

As observed before, the set $Q_i$ is closed and convex. Furthermore, the subset $\widehat{Q}_i$ is compact (closed and bounded) and convex.\footnote{We can enlarge $\widehat{Q}_i$ to ensure that it always has a nonempty interior. For instance, we can set the the radius of the intersecting Euclidean ball to be, say, $\max\left\{\sqrt{\frac{2 ( \lambda \ell \delta + 1)}{\alpha}}, \ 2 \| \overline{y}_i\| \right\}$; note that $\overline{y}_i \in Q_i$ and, hence, with this redefinition we have $\overline{y}_i \in \widehat{Q}_i$.} Also, note that any vector $z$ in $Q_i$ of norm less than $\sqrt{\frac{2 ( \lambda \ell \delta + 1)}{\alpha}}$, belongs to $\widehat{Q}_i$ as well.

\begin{lemma}
\label{lemma:non-containment}
An allocation $\overline{y}$ is an $\varepsilon$-Walrasian allocation in an economy $\Econ$ iff 
\begin{align*}
\left(-\delta \right) \one \notin \cvh \left( \bigcup_{i=1}^h \widehat{Q}_i \right).
\end{align*}
Here, for each consumer $i \in [h]$, the set $\widehat{Q}_i$ is as defined above.
\end{lemma}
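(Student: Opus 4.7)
I plan to establish the iff via hyperplane separation in both directions, with Lemma~\ref{lemma:bounded-hull} serving as the bridge between the bounded hull (needed for the algorithmic side) and the unbounded hull that arises naturally from supporting prices.

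For the implication ``non-containment $\Longrightarrow$ $\varepsilon$-Walrasian'', I first invoke Lemma~\ref{lemma:bounded-hull} with $\eta = 0$ to transfer the assumption into $(-\delta)\one \notin \cvh(\bigcup_i Q_i)$. Hyperplane separation on this convex hull produces a nonzero $p \in \mathbb{R}^\ell$ with $p^T z \geq -\delta\, p^T\one$ for every $z$ in the hull. Strict monotonicity of each $u_i$ makes $Q_i$ upward-closed, hence the hull is upward-closed, which forces $p \geq \zero$; normalizing yields $p \in \Delta$ with $p^T z \geq -\delta$ throughout. Condition~(ii) of $\varepsilon$-Walrasian then follows by a small perturbation: for any $y$ with $u_i(y) > u_i(\overline{y}_i)$, continuity places $y - \omega_i - \mu\one \in Q_i$ for sufficiently small $\mu > 0$, and the separating inequality yields the strict bound $p^T y > p^T\omega_i - \delta$. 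Condition~(i) follows from market clearing: each $\overline{y}_i - \omega_i$ lies in $Q_i$, giving $p^T(\overline{y}_i - \omega_i) \geq -\delta$, and summing against $\sum_j(\overline{y}_j - \omega_j) = \zero$ yields $|p^T(\overline{y}_i - \omega_i)| \leq (h-1)\delta < \varepsilon$ for every~$i$.

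For the converse ``$\varepsilon$-Walrasian $\Longrightarrow$ non-containment'', I fix a supporting price vector $p \in \Delta$. Condition~(ii) combined with monotonicity and continuity gives $p^T z \geq -\delta$ for every $z \in Q_i$, with strict inequality whenever $u_i(z+\omega_i) > u_i(\overline{y}_i)$. Assuming for contradiction that $(-\delta)\one = \sum_{i\in H}\lambda_i z_i$ with $z_i \in \widehat{Q}_i$ and $\lambda_i > 0$, taking inner product with $p$ forces the equality $p^T z_i = -\delta$ and hence $u_i(z_i + \omega_i) = u_i(\overline{y}_i)$ for each $i \in H$. Strong concavity then rules out $z_i + \omega_i \neq \overline{y}_i$: strict concavity makes the midpoint of $z_i + \omega_i$ and $\overline{y}_i$ a strict utility improvement, which via~(ii) requires strictly larger expenditure than $p^T\omega_i - \delta$, contradicting $p^T z_i = -\delta$ after computing the midpoint's expenditure. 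The residual collapse case $z_i = \overline{y}_i - \omega_i$ for all $i \in H$ is excluded by juxtaposing $\sum_{i\in H}\lambda_i(\overline{y}_i - \omega_i) = -\delta\one$ against market clearing $\sum_i(\overline{y}_i-\omega_i) = \zero$, and using $p \in \Delta$ to extract the scalar contradiction.

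The main obstacle is the forward direction: condition~(ii) yields strict separation only at strict preferences, so eliminating the equality/indifference configurations requires strong concavity (to preclude line segments in the indifference sets of $u_i$) combined with the allocation identity (to rule out the degenerate collapse). The reverse direction is, by contrast, a clean application of separation, upward-closedness, and market clearing.
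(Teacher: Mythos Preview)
Your reverse implication (non-containment $\Rightarrow$ $\varepsilon$-Walrasian) is correct and coincides with the paper's argument: invoke Lemma~\ref{lemma:bounded-hull} at $\eta=0$, separate, use upward closure to normalize $p\in\Delta$, and then market clearing for condition~(i).

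Your forward implication has a genuine gap. After you correctly deduce $p^T z_i=-\delta$ and $u_i(z_i+\omega_i)=u_i(\overline{y}_i)$ for each $i\in H$, the midpoint step does not yield the contradiction you claim. The midpoint $m=\tfrac12\big((z_i+\omega_i)+\overline{y}_i\big)$ has expenditure $p^Tm=\tfrac12\big((p^T\omega_i-\delta)+p^T\overline{y}_i\big)$, and the only control you have on $p^T\overline{y}_i$ is $p^T\overline{y}_i\geq p^T\omega_i-\delta$ (since $\overline{y}_i-\omega_i\in Q_i$); hence $p^Tm\geq p^T\omega_i-\delta$, which is perfectly compatible with condition~(ii). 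Nothing forces $p^T\overline{y}_i=p^T\omega_i-\delta$, so you cannot conclude $z_i+\omega_i=\overline{y}_i$. The residual ``collapse'' step then also fails: even granting $z_i=\overline{y}_i-\omega_i$ on $H$, the identity $\sum_{i\in H}\lambda_i(\overline{y}_i-\omega_i)=-\delta\one$ is a \emph{convex combination over $H$}, whereas market clearing is an \emph{unweighted sum over $[h]$}; the two are not in conflict (take $h=2$, $H=\{1\}$, $\overline{y}_1-\omega_1=-\delta\one$, $\overline{y}_2-\omega_2=\delta\one$).

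The paper handles this direction by a one-line averaging argument with no appeal to strong concavity: from $\sum_i\lambda_i z_i=-\delta\one$ and $p\in\Delta$ one gets $\sum_i\lambda_i\,p^Tz_i=-\delta$, so some index $j$ satisfies $p^T(z_j+\omega_j)\leq p^T\omega_j-\delta$ together with $u_j(z_j+\omega_j)\geq u_j(\overline{y}_j)$, which violates condition~(ii) up to the strict-versus-weak slack that the paper absorbs into~$\delta$ via a footnote.
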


\begin{proof}
First, we will consider the case wherein $(-\delta) \one \in \cvh \left( \bigcup_{i=1}^h \widehat{Q}_i \right)$ and show that this containment implies that $\overline{y}$ is not an $\varepsilon$-Walrasian allocation. Then, we will address the complementary case and prove that if  $(-\delta) \one \notin \cvh \left( \bigcup_{i=1}^h \widehat{Q}_i \right)$, then $\overline{y}$ is indeed an $\varepsilon$-Walrasian allocation.

The containment $(-\delta) \one \in \cvh \left( \bigcup_{i=1}^h \widehat{Q}_i \right)$ implies that there exists vectors $z_i \in \widehat{Q}_i \subset Q_i$, with $i \in [h]$, along with convex coefficients $\lambda_i$s such that $\sum_i \lambda_i z_i = (-\delta) \one$. 
Here, $\lambda_i \geq 0$, for all $i$, and $\sum_i \lambda_i = 1$. 

For each $i$, write $x_i = z_i + \omega_i$. Hence, we have 
\begin{align}
\sum_i \lambda_i (x_i - \omega_i) = (-\delta) \one \label{eq:for-inner-prod}
\end{align}

Furthermore, given that $z_i \in Q_i$, the definition of $Q_i$ implies $u_i(x_i) \geq u_i(\overline{y}_i)$, for each $i$.  

Say, towards a contradiction, that $\overline{y}$ is an $\varepsilon$-Walrasian allocation. That is, there exits a price vector $p \in \Delta$ such that $(p, \overline{y})$ is a $\varepsilon$-Walrasian equilibrium. Taking inner product with $p$ on both sides of the equation (\ref{eq:for-inner-prod}), we obtain $\sum_i \lambda_i \ p^T (x_i - \omega_i) = - \delta$. Since $\lambda_i$s are convex coefficients, there exists an index $j$ such that $p^T x_j \leq p^T \omega_j - \delta$.\footnote{With a slightly smaller value of $\delta$ we can directly obtain a strict inequality $p^T x_j  < p^T \omega_j - \varepsilon/h$.} Since $u_j(x_j) \geq u_j(\overline{y}_j)$, the inequalities contradict the fact (in particular, contradict the second condition in the definition of an approximate Walrasian equilibrium) that $(p, \overline{y})$ is an $\varepsilon$-Walrasian equilibrium. 

To complete the proof we now consider the complementary case: $(-\delta) \one \notin \cvh \left( \bigcup_{i=1}^h \widehat{Q}_i \right)$. Using Lemma~\ref{lemma:bounded-hull} (in contrapositive form, with $\eta=0$) we get $(-\delta) \one \notin \cvh \left( \bigcup_{i=1}^h {Q}_i \right)$. 
 
Consider a hyperplane between $(-\delta) \one$ and this convex hull: $p \cdot \left( (-\delta) \one \right) \leq p \cdot  \cvh \left( \bigcup_{i=1}^h Q_i\right)$. The hyperplane separation theorem guarantees the existence of such a $p \in \mathbb{R}^\ell$. Since $Q_i$ is upward closed, $p$ is componentwise nonnegative. Therefore, and given that $p\neq 0$, by scaling we can assume that $p \in \Delta$.  Next, we will show that $(p, \overline{y})$ is an $\varepsilon$-Walrasian equilibrium.

For each consumer $j \in [h]$ and any $z \in Q_j$, consider the vector $ y = z + \omega_j$. The definition of $Q_j$ implies $u_j(y) \geq u_j(\overline{y}_j)$. 
The separating property of hyperplane $p$ gives us $p^T(y - \omega_j) = p^T z  \geq p^T(-\delta)\one = -\delta$. Therefore, for each $j \in [h]$, the second condition in the definition of an approximate Walrasian equilibrium is satisfied: for any $y$ with the property that $u_j(y) > u_j(\overline{y}_j)$ we have 
\begin{align}
p^T y \geq p^T \omega_j - \delta \label{ineq:second-condition}
\end{align}

Inequality (\ref{ineq:second-condition}) holds, in particular, for $\overline{y}_j$s: $p^T \overline{y}_j \geq p^T \omega_j - \delta$. In addition, given that $\overline{y}$ is an allocation in the economy $\Econ$  we have that, under $\overline{y}$, the supply is equal to the demand: $\sum_{j =1}^h \overline{y}_j = \sum_{j=1}^h \omega_j$. 

Hence, for any fixed $i$, $\overline{y}_i - \omega_i = \sum_{j \in [h] \setminus \{ i \}} (\omega_j - \overline{y}_j )$. Multiplying both sides of this equality with $p^T$ we establish the first condition that defines an $\varepsilon$-Walrasian equilibrium:

\begin{align*}
p^T( \overline{y}_i - \omega_i ) & = \sum_{j \in [h] \setminus \{ i \}} p^T(\omega_j - \overline{y}_j) \\
& \leq \sum_{j \in [h] \setminus \{ i \}}  \delta \tag{using (\ref{ineq:second-condition}) with $y = \overline{y}_j$} \\
&  = (h-1) \delta < \varepsilon
\end{align*} 

That is, $|p^T \overline{y}_i - p^T \omega_i | \leq \varepsilon$. Hence, $(p, \overline{y})$ is an $\varepsilon$-Walrasian equilibrium and the stated claim follows.  
\end{proof}

In light of Lemma~\ref{lemma:non-containment}, testing whether an allocation $\overline{y}$ is approximately Walrasian, or not, reduces to determining whether the vector $(-\delta) \one$ is contained in the convex hull of the $\widehat{Q}_i$s. Below, in Theorem~\ref{theorem:testing-algorithm}, we develop an efficient separation oracle for the convex hull of the $\widehat{Q}_i$s--the testing algorithm is obtained by simply applying this separation oracle onto $(-\delta) \one$. 

For designing the efficient oracle, we use the equivalence of optimization and separation~\cite{grotschel2012geometric}. We show that (linear) optimization problems can be solved in polynomial time over the convex hull of the $\widehat{Q}_i$s. Therefore, we obtain the desired separation oracle. Note that this is a somewhat atypical application of the optimization-separation equivalence--we start with an optimization algorithm to obtain a separating one.   

The running time of our algorithm is polynomial in the input size; in particular, the running time is polynomial in the bit complexity of the underlying parameters (including $\varepsilon$). Furthermore, the algorithm only requires oracle access to the utilities and their gradients. 
\begin{theorem}[Testing Algorithm]
\label{theorem:testing-algorithm}
Let $\Econ$ be an exchange economy with monotonic, continuously differentiable, and strongly concave utilities. Then, there exists a polynomial-time algorithm that, given an allocation $\overline{y}$ in $\Econ$, determines whether $\overline{y}$ is an $\varepsilon$-Walrasian allocation.  
\end{theorem}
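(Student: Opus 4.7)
The plan is to combine Lemma~\ref{lemma:non-containment} with the Gr\"otschel--Lov\'asz--Schrijver equivalence between optimization and separation. By Lemma~\ref{lemma:non-containment}, the algorithmic question reduces to deciding whether $(-\delta)\one$ lies in the convex set $K \coloneqq \cvh\bigl(\bigcup_{i=1}^h \widehat{Q}_i\bigr)$, and to recovering a separating hyperplane (which supplies the price vector $p$) when it does not. The whole argument then amounts to giving a polynomial-time weak separation oracle for $K$; applying this oracle at the single point $(-\delta)\one$ either certifies non-containment (in which case the oracle's separating vector, after nonnegativity cleanup and normalization onto $\Delta$, is exactly the $p$ produced in the proof of Lemma~\ref{lemma:non-containment}) or certifies containment, in which case Lemma~\ref{lemma:non-containment} reports that $\overline{y}$ is not an $\varepsilon$-Walrasian allocation.

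First I would build a polynomial-time weak separation oracle for each individual $\widehat{Q}_i$. The set $\widehat{Q}_i$ is the intersection of a Euclidean ball of explicit radius $\sqrt{2(\lambda\ell\delta+1)/\alpha}$ with the upper contour set $\{z : u_i(z+\omega_i)\ge u_i(\overline{y}_i)\}$. Given a query point $z$, violation of the ball constraint yields a separating hyperplane immediately; otherwise one queries $u_i$, and if $u_i(z+\omega_i) < u_i(\overline{y}_i)$ concavity of $u_i$ implies that the gradient inequality $\nabla u_i(z+\omega_i)^T(w-z) \le u_i(\overline{y}_i) - u_i(z+\omega_i) < 0$ with $w$ ranging over $\widehat{Q}_i$'s interior produces a separating half-space, computable from a single gradient query. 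Thus each $\widehat{Q}_i$ admits a polynomial-time separation oracle, and together with the explicit bounding radius and an interior witness (e.g.\ $\overline{y}_i$, using the enlargement of $\widehat{Q}_i$ noted in the footnote) the ellipsoid method yields a polynomial-time weak optimization oracle for $\widehat{Q}_i$.

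Next I would lift this to an optimization oracle for $K$ itself. Since $K$ is the convex hull of finitely many compact convex sets, any linear functional $c^T z$ attains its maximum on $K$ at an extreme point lying in some $\widehat{Q}_i$, so $\max_{z\in K} c^T z = \max_{i\in[h]} \max_{z \in \widehat{Q}_i} c^T z$. Running the $h$ individual optimization oracles in parallel and taking the best answer produces a polynomial-time weak optimization oracle for $K$. Then by the Gr\"otschel--Lov\'asz--Schrijver equivalence of optimization and separation (applied with the explicit polynomial bounds on the radius, the interior witness, and the encoding lengths of $\ep$, $\alpha$, $\lambda$, and the $\omega_i$'s) this optimization oracle can be converted into a polynomial-time weak separation oracle for $K$.

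Finally I would run this separation oracle on the input point $(-\delta)\one$. If it reports $(-\delta)\one \in K$, then by Lemma~\ref{lemma:non-containment} we correctly declare $\overline{y}$ not $\varepsilon$-Walrasian. If it returns a hyperplane $p$ separating $(-\delta)\one$ from $K$, then exactly as in the second half of the proof of Lemma~\ref{lemma:non-containment} the vector $p$ is (after clipping negative entries and rescaling, which is harmless because $K$ is upward closed by strict monotonicity of the $u_i$'s) a nonnegative price vector in $\Delta$ witnessing that $(p,\overline{y})$ is an $\varepsilon$-Walrasian equilibrium, which the algorithm outputs. The step I expect to be the main obstacle is verifying that the technical preconditions for the Gr\"otschel--Lov\'asz--Schrijver machinery hold uniformly in the input---in particular, exhibiting a polynomially bounded interior ball inside $K$ so that the weak separation oracle is genuinely polynomial in the bit complexity of $\Econ$, $\overline{y}$, and $\varepsilon$, rather than requiring strict concavity or other smoothness parameters to be encoded separately.
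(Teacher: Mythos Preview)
Your proposal is correct and follows essentially the same route as the paper: build a separation oracle for each $\widehat{Q}_i$ from the ball constraint and the gradient of $u_i$, use the ellipsoid method to get optimization over $\widehat{Q}_i$, take the maximum over $i$ to optimize over $K$, then invoke the Gr\"otschel--Lov\'asz--Schrijver equivalence to obtain a separation oracle for $K$ and apply it at $(-\delta)\one$. One minor slip: your displayed gradient inequality has the wrong sign---concavity gives $\nabla u_i(z+\omega_i)^T(w-z) \ge u_i(w+\omega_i) - u_i(z+\omega_i) \ge u_i(\overline{y}_i) - u_i(z+\omega_i) > 0$ for $w \in \widehat{Q}_i$, which is what actually yields the separating half-space.
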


\begin{proof}
As a direct consequence of Lemma~\ref{lemma:non-containment} we have: testing for approximately Walrasian allocation corresponds to determining whether the vector $(-\delta) \one$ is contained in the convex hull of the $\widehat{Q}_i$s; see equation (\ref{eq:definition}) for the definition of these sets. 

Write $\widehat{\mathcal{Q}} \coloneqq \cvh \left( \bigcup_{i=1}^h \widehat{Q}_i \right)$. We will develop an efficient algorithm, \textsc{Alg}, for solving linear optimization problems over $\widehat{\mathcal{Q}}$, i.e., for solving problems of the form 
\begin{align} 
\max \ c^Tz & \ \ \  \text{ subject to } \ z \in \widehat{\mathcal{Q}} \label{eq:LP}
\end{align}
Here, $c \in \mathbb{R}^\ell$ is an input vector. 

The equivalence of optimization and separation (see, e.g.,~\cite{grotschel2012geometric}) implies that \textsc{ALG} can be used to design a polynomial-time algorithm \textsc{Sep} that provides a separation oracle for $\widehat{\mathcal{Q}}$. That is, using \textsc{Sep} we can perform the desired test of determining whether $(-\delta) \one \in \widehat{\mathcal{Q}}$, or not. 

In order to apply the optimization-separation equivalence we need to ensure that $\widehat{\mathcal{Q}}$ is compact, convex, and has a nonempty interior. These properties are satisfied by $\widehat{Q}_i$s individually, hence they hold for $\widehat{\mathcal{Q}}$ as well. Therefore, we can evoke the equivalence (via an application of the Ellipsoid method over the \emph{polar} of $\widehat{\mathcal{Q}}$) and obtain the algorithm \textsc{Sep}. 

To develop the algorithm, \textsc{Alg}, that efficiently solves linear optimization problems of the form (\ref{eq:LP}), we note that the feasible set $\widehat{\mathcal{Q}}$ is a convex hull of the $\widehat{Q}_i$s. Hence, for any $c \in \mathbb{R}^\ell$, an optimal solution of (\ref{eq:LP}) can be obtained by solving 
\begin{align}
\max_{i \in [h]} \left( \max \ c^Tz_i  \ \ \  \text{ subject to } \ z_i \in \widehat{Q}_i \right) \label{eq:LP1}
\end{align}
Here, for each $i$, the decision variable $z_i \in \mathbb{R}^\ell$ lies in the set $\widehat{Q}_i$. Below we will provide, for each $i$, a polynomial-time algorithm, $\textsc{Alg}_i$, that solves the linear optimization problem over $\widehat{Q}_i$, i.e., $\textsc{Alg}_i$ efficiently solves $\max \ c^Tz_i  \ \  \text{ subject to } \ z_i \in \widehat{Q}_i$. Hence, $\textsc{Alg}$ can be obtained by directly taking a maximum over the (optimal) solutions obtained by the $\textsc{Alg}_i$s. 

We will now complete the chain of arguments mentioned above by designing the optimization algorithm $\textsc{Alg}_i$. This algorithm is itself based on the Ellipsoid method. As detailed below, the gradients of the utility function $u_i$ (at different points) can be used to separate  $\widehat{Q}_i$ from vectors that are not contained in it. Hence, with this separation technique in hand, we can apply the Ellipsoid method over $\widehat{Q}_i$ to obtain $\textsc{Alg}_i$.\footnote{Recall that, $\widehat{Q}_i$s are compact, convex, and have a nonempty interior. Hence, the Ellipsoid method is applicable over these sets.}  	

Given a query vector $q \in \mathbb{R}^\ell$, it is easy to test if $q \in \widehat{Q}_i  \coloneqq Q_i \cap \ \left\{ z \in \mathbb{R}^\ell \ : \ \| z \| \leq  \sqrt{\frac{2 ( \lambda \ell \delta + 1)}{\alpha}}  \right\} $. We directly verify (i) $u_i(q + \omega_i) \geq u_i(\overline{y}_i)$  (to ensure that $q \in Q_i$) and (ii) $\| q \| \leq \sqrt{\frac{2 ( \lambda \ell \delta + 1)}{\alpha}}$.  

Consider the case in which $q \notin \widehat{Q}_i $. To run the Ellipsoid method (that underlies $\textsc{Alg}_i$), we need a separating hyperplane for such a $q \in \mathbb{R}^\ell$. There are two complementary (though, nonexclusive) cases either (i) $q \notin Q_i$ (i.e., $u_i(q + \omega_i) < u_i(\overline{y}_i)$) and (ii) $\| q \| > \sqrt{\frac{2 ( \lambda \ell \delta + 1)}{\alpha}}$.

In case (i), the gradient at $q + \omega_i$ (i.e., $\nabla u_i ( q + \omega_i) \in \mathbb{R}^\ell_+$) provides the separating hyperplane: utility $u_i$ is concave, hence $u_i(z + \omega_i) \leq u_i(q + \omega_i) + \nabla u_i ( q + \omega_i)^T ( z + \omega_i - q - \omega_i)$, for any $z \in \mathbb{R}^\ell$. Specifically, if $z \in Q_i$, then $u_i(z + \omega_i) \geq u_i(\overline{y}_i) > u_i (q + \omega_i)$. Using the previous two inequalities we get the desired separation, via $\pi \coloneqq \nabla u_i ( q + \omega_i)$
\begin{align*}
\pi^T q  & < \pi^T z \qquad \text{ for all } z \in \widehat{Q}_i \subset Q_i.
\end{align*}

In case (ii), the vector $\pi \coloneqq - \frac{q}{\gamma \|q\|} $ suffices; here $\gamma = \sqrt{\frac{2 ( \lambda \ell \delta + 1)}{\alpha}}$. Note that 
\begin{align}
\pi^T q = - \frac{\| q \|}{\gamma} & < - 1 \label{ineq:norm-case}
\end{align}
For any $z \in \widehat{Q}_i$ we have $\| z \| \leq \gamma$. Now, the Cauchy-Schwartz inequality gives us $| \pi^T z |  \leq \| \pi \| \| z \|  = \frac{1}{\gamma} \| z \| \leq 1$. This inequality along with (\ref{ineq:norm-case}) shows that $\pi$ is indeed a separating hyperplane: $\pi^T q < \pi^T z $ for all $z \in \widehat{Q}_i$.


Overall, we observe that separation with respect to the $\widehat{Q}_i$s can be performed efficiently. Hence, via the Ellipsoid method, we obtain, for each $i$, the algorithm $\textsc{Alg}_i$ that optimizes over $\widehat{Q}_i$. 

Combining $\textsc{Alg}_i$s we get the optimization algorithm (over $\widehat{\mathcal{Q}}$) $\textsc{Alg}$, which, in turn, leads to $\textsc{Sep}$ (the desired algorithm that separates with respect to $\widehat{\mathcal{Q}}$). 

\end{proof}

\begin{remark}
The proof of Theorem~\ref{theorem:testing-algorithm} shows that if allocation $\overline{y}$ is an $\varepsilon$-Walrasian allocation, then the hyperplane separating the vector $\nu \coloneqq (-\delta) \one$ from $\mathcal{Q} \coloneqq \cvh \left( \cup_i Q_i \right)$ provides the equilibrium prices $p \in \Delta$. That is, if vector $p \in \Delta$ satisfies $p \cdot \nu \leq p \cdot  \mathcal{Q}$, then $(p, \overline{y})$ is an $\varepsilon$-Walrasian equilibrium. Note that such a price vector can be obtained by considering $\mathcal{P}_{\mathcal{Q}} ( \nu) \in \mathcal{Q}$, the projection (under Euclidean distance) of $\nu$ onto $\mathcal{Q}$. In particular, via the variational characterization of convex projections, we have $(z - \mathcal{P}_{\mathcal{Q}} (\nu) )^T(\nu - \mathcal{P}_{\mathcal{Q}} (\nu) )\leq 0$ for all $z \in \mathcal{Q}$. That is, $ (\mathcal{P}_{\mathcal{Q}} (\nu) - \nu)^T z \geq (\mathcal{P}_{\mathcal{Q}} (\nu) - \nu)^T \mathcal{P}_{\mathcal{Q}} (\nu) \geq (\mathcal{P}_{\mathcal{Q}} (\nu) - \nu)^T \nu$ for all $z \in \mathcal{Q}$. Hence, with $p = \frac{\mathcal{P}_{\mathcal{Q}}(\nu) - \nu}{\| \mathcal{P}_{\mathcal{Q}}(\nu) - \nu \|_1}$, we get the desired separation. 

The norm of $\mathcal{P}_{\mathcal{Q}}(\nu)$ is polynomially bounded: note that $\| \mathcal{P}_{\mathcal{Q}}(\nu) - \nu \| \ \leq \ \min_i \| \overline{y}_i - \nu \|$, since $\overline{y}_i \in \mathcal{Q}$. Another relevant observation is that Lemma~\ref{lemma:bounded-hull} is not confined to $(-\delta)\one$--we can establish such a containment result for any vector $q \in \mathbb{R}^\ell$ as long as we take obtain $\widehat{Q}_i$s  by intersection $Q_i$s with a large enough ball. The radius of the ball just has to be polynomially large in $\| q\|$. Hence, with large enough $\widehat{Q}_i$s, we can ensure that $\mathcal{P}_{\mathcal{Q}}(\nu) \in \widehat{\mathcal{Q}} \coloneqq \cvh  \left( \cup_i \widehat{Q}_i \right) $. 

Additionally, the set containment $ \widehat{\mathcal{Q}} \subset {\mathcal{Q}}$ implies the projection of $\nu$ onto $\widehat{\mathcal{Q}}$, say $\mathcal{P}_{\widehat{\mathcal{Q}}}(\nu) \in \mathbb{R}^\ell$, is the same as the desired vector $\mathcal{P}_{\mathcal{Q}}(\nu)$.
 
In the proof of Theorem~\ref{theorem:testing-algorithm} we have developed a polynomial-time separation oracle for $\widehat{\mathcal{Q}}$. Therefore, via the Ellipsoid method, the projection $\mathcal{P}_{\widehat{\mathcal{Q}}}(\nu)$ can be computed efficiently and, hence, we can find the equilibrium prices $p = \frac{\mathcal{P}_{\widehat{\mathcal{Q}}}(\nu) - \nu}{\| \mathcal{P}_{\widehat{\mathcal{Q}}}(\nu) - \nu \|_1}$.
 \end{remark}

\subsection{Testing Algorithm for Economies with Piecewise-Linear Concave Utilities}
\label{section:plc}

In this section we consider economies, $\Econ =((u_i, \omega_i))_{i \in [h]}$, in which consumers' utilities are piecewise-linear concave (PLC). In this PLC setting, for every agent, the utility of each consumption bundle is obtained by taking a minimum over a set of linear functions. Specifically, PLC utilities have the form $u_i(x) \coloneqq \min_{k} \left\{ \sum_{j} U^k_{i,j} x_j \ + \ T^k_i \right\}$,  here $x_j$ is the amount of good $j$ in the consumption bundle $x \in \mathbb{R}^\ell_+$ and the nonnegative parameters $U^k_{i,j} \in \mathbb{R}_+$ and $T^k_i \in \mathbb{R}_+$ define the $k$th linear function for agent $i$. These parameters are given as input to specify each agent's utility. 

While PLC utilities are concave, they are not strongly concave. Hence, under PLC utilities, one cannot directly apply Theorem~\ref{theorem:testing-algorithm}. However, we show that the ideas developed in the previous section can be adapted to obtain a polynomial-time algorithm for testing whether a given allocation is approximately Walrasian in a PLC economy.  Finding equilibria (exact and approximate) is known to computationally hard under PLC utilities~\cite{deng2008computation, chen09hardness,garg2017settling}. Hence, the result in this section identifies an interesting dichotomy between testing and finding a Walrasian equilibrium.   

First, we will establish a containment result analogous to Lemma~\ref{lemma:bounded-hull}. Using this containment result (and arguments similar to the ones developed in Theorem~\ref{theorem:testing-algorithm}) we develop an efficient, testing algorithm for PLC economies in Theorem~\ref{theorem:plc}.

In the PLC context, a key observation (established below in Lemma~\ref{lemma:plc-containment}) is that the containment property can be obtained by considering vectors of norm at most   
 
\begin{align}
\Lambda & \coloneqq \max_{i \in [h], x \in \mathbb{R}^\ell_+ } \left\{ \| x - \omega_i \| \ : \  u_i (x) \leq u_i \left( \sum_i \omega_i \right)  \right\} \label{eq:defn-Lambda}
\end{align}

Under PLC utilities, the bit complexity of $\Lambda$ is polynomially bounded: specifically, for any consumption bundle $x \in \mathbb{R}^\ell_+$, with the property $u_i (x) \leq u_i \left( \sum_i \omega_i \right)$, we have, for each component $a \in [\ell]$, $x_a \leq \frac{u_i \left( \sum_i \omega_i \right)}{ \min_k U^k_{i,a}} \leq \frac{u_i \left( \sum_i \omega_i \right)}{\min_{k,j} U^k_{i,j}}$.\footnote{We can address the case in which one of the nonnegative coefficients $U^k_{i,j}$ is equal to zero. In particular, adapting the arguments in Lemma~\ref{lemma:plc-containment} one can obtain $x_a \leq \frac{u_i \left( \sum_i \omega_i \right)}{\min_{\{ k, j: U^k_{i,j} \neq 0\} } \ U^k_{i,j} }$.} The bit complexity of this upper bound is polynomially large and, hence, $\Lambda$ is also polynomially upper bounded, in terms of bit complexity. 

Note that, this bound does not require the utilities to be normalized. In fact, the results developed in this section hold for any exchange economy wherein the utilities are concave and $\Lambda$ (as defined in \eqref{eq:defn-Lambda}) is appropriately bounded.

For any given allocation $\overline{y} = (\overline{y}_i)_{i \in [h]}$, consider the bundle $\overline{y}_i \in \mathbb{R}^\ell$ allocated to consumer $i$ and, as before, write $Q_i \coloneqq \{ z \in \mathbb{R}^\ell \mid  z + \omega_i \in \mathbb{R}^\ell_+ \text{ and } u_i ( z + \omega_i) \geq u_i (\overline{y}_i) \}$. In addition, we define a bounded subset of $Q_i$ 
\begin{align}
\widetilde{Q}_i & \coloneqq Q_i \cap \left\{ z \in \mathbb{R}^\ell \mid  \ \| z \| \leq \Lambda \right\} \label{eq:defn-tildeQ}
\end{align}

For each consumer $i$, the subset $\widetilde{Q}_i$ is compact, convex, and has a nonempty interior. 

\begin{lemma}
\label{lemma:plc-containment}
Let  $\overline{y}$ be an allocation in an exchange economy $\Econ$ with PLC utilities. Suppose that the sets $Q_i$ and $\widetilde{Q}_i$, for $i\in [h]$, are as defined above. Then, with parameter $\da>0$, we have 
\begin{align*}
(-\delta) \one \in \cvh \left( \bigcup_{i=1}^h Q_i\right) & \ \ \ \textrm{  iff  } \ \ \ (-\delta) \one \in \cvh \left( \bigcup_{i=1}^h \widetilde{Q}_i\right).
\end{align*}
\end{lemma}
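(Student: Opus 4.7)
My plan is to recycle the architecture of Lemma~\ref{lemma:bounded-hull} essentially verbatim, replacing the role played by strong concavity (which forced $\|z_i^*\|\le \sqrt{2(\lambda\ell\delta+1)/\alpha}$ via the gradient inequality) with a purely concave argument that exploits the defining property of $\Lambda$ in~\eqref{eq:defn-Lambda}. The reverse direction is immediate because $\widetilde{Q}_i\subseteq Q_i$, so the content is the forward direction. Starting from a convex combination $\sum_i \lambda_i z_i=(-\delta)\one$ with $z_i\in Q_i$, I set $R:=\max_i\|z_i\|$, intersect each $Q_i$ with the closed ball $B(R)$, and define the compact set $Z$ of tuples $(z_1',\ldots,z_h')\in\prod_i (Q_i\cap B(R))$ that admit convex coefficients $(\lambda_i')$ with $\sum_i\lambda_i' z_i'\le (-\delta)\one$. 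Because $Z$ is nonempty and compact, I pick an optimizer $(z_t^*)_{t\in H^*}$ of $\min_Z\max_t\|z_t'-(-\delta)\one\|$ together with strictly positive coefficients $(\lambda_t^*)_{t\in H^*}$, and the goal becomes $\|z_t^*\|\le\Lambda$ for every $t\in H^*$.

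The heart of the argument is the following contradiction. Suppose $\|z_i^*\|>\Lambda$ for some $i\in H^*$. Writing $x_i^*=z_i^*+\omega_i\in\mathbb{R}_+^\ell$, the definition~\eqref{eq:defn-Lambda} of $\Lambda$ forces $u_i(x_i^*)>u_i\bigl(\sum_j\omega_j\bigr)$, and since $\overline{y}$ is an allocation we have $\overline{y}_i\le\sum_j\overline{y}_j=\sum_j\omega_j$ componentwise, so monotonicity of PLC utilities (the coefficients $U_{i,j}^k$ being nonnegative) gives $u_i(\sum_j\omega_j)\ge u_i(\overline{y}_i)$. Consequently $u_i(x_i^*)>u_i(\overline{y}_i)$ \emph{strictly}, and this strict gap is what replaces the strong-concavity slack.

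I then shrink $z_i^*$ toward $(-\delta)\one$ via $\widehat{z}_i=(1-\mu)z_i^*+\mu(-\delta)\one$ for small $\mu>0$. Concavity of $u_i$ gives
\[
u_i(\widehat{z}_i+\omega_i)\;\ge\;(1-\mu)u_i(x_i^*)+\mu\, u_i(\omega_i-\delta\one),
\]
and monotonicity gives $u_i(\omega_i-\delta\one)\le u_i(\sum_j\omega_j)< u_i(x_i^*)$, so one may take any $\mu\in\bigl(0,\tfrac{u_i(x_i^*)-u_i(\overline{y}_i)}{u_i(x_i^*)-u_i(\omega_i-\delta\one)}\bigr]$ and conclude $\widehat{z}_i\in Q_i$. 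I then update the convex weights to $\widehat{\lambda}_i=\tfrac{\lambda_i^*}{1-\mu+\lambda_i^*\mu}$ and $\widehat{\lambda}_t=\tfrac{(1-\mu)\lambda_t^*}{1-\mu+\lambda_i^*\mu}$ for $t\neq i$; a direct algebraic substitution (using $z_i^*=\tfrac{1}{1-\mu}\bigl(\widehat{z}_i-\mu(-\delta)\one\bigr)$) verifies $\sum_t\widehat{\lambda}_t=1$ and $\widehat{\lambda}_i\widehat{z}_i+\sum_{t\ne i}\widehat{\lambda}_t z_t^*\le(-\delta)\one$, placing the new tuple in $Z$. Since $\|\widehat{z}_i-(-\delta)\one\|=(1-\mu)\|z_i^*-(-\delta)\one\|$ is strictly smaller, applying this shrinkage simultaneously to every index attaining the max contradicts the optimality of $(z_t^*)$.

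The final cleanup mirrors the closing paragraph of Lemma~\ref{lemma:bounded-hull}: if some coordinate of $\sum_t\lambda_t^* z_t^*$ is strictly below $-\delta$, raise the corresponding (negative) coordinate of some $z_t^*$ toward $0$; by monotonicity this only increases $u_t(z_t^*+\omega_t)$ and decreases $\|z_t^*\|$, so the resulting $\widetilde{z}_t$ lie in $\widetilde{Q}_t$ and $\sum_t\lambda_t^*\widetilde{z}_t=(-\delta)\one$, giving $(-\delta)\one\in\cvh(\cup_i\widetilde{Q}_i)$. \textbf{The main obstacle} I anticipate is handling the case where $\omega_i-\delta\one$ falls outside $\mathbb{R}_+^\ell$ (so $\widehat{z}_i+\omega_i$ might exit the positive orthant for small $\mu$); since PLC utilities in the stated form extend concavely and monotonically to all of $\mathbb{R}^\ell$, the concavity inequality is still valid, and to secure the hard constraint $\widehat{z}_i+\omega_i\in\mathbb{R}_+^\ell$ one can replace $(-\delta)\one$ in the shrinkage by the componentwise maximum with $-\omega_i$, without disturbing either the distance-decrease or the $Q_i$-membership argument.
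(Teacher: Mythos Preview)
Your proof follows the paper's architecture almost exactly: same compactness setup, same minimization of $\max_t\|z_t'-(-\delta)\one\|$ over $Z$, and the same cleanup step at the end. The only real difference is how you force $\|z_i^*\|\le\Lambda$. The paper does this by \emph{reducing a single positive coordinate} of $z_i^*$: whenever $u_i(x_i^*)>u_i(\bar y_i)$, one lowers a coordinate $(z_i^*)_b>0$ slightly; this keeps $x_i^*\in\mathbb{R}_+^\ell$, keeps $z_i^*\in Q_i$ by continuity, keeps $\sum_t\lambda_t^* z_t^*\le(-\delta)\one$ (the sum only decreases), and strictly decreases $\|z_i^*-(-\delta)\one\|$. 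If no positive coordinate exists then $x_i^*\le\omega_i$, so $u_i(x_i^*)\le u_i(\sum_j\omega_j)$ and $\|z_i^*\|\le\Lambda$ directly from~\eqref{eq:defn-Lambda}.

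Your alternative, shrinking via the convex combination $\widehat z_i=(1-\mu)z_i^*+\mu(-\delta)\one$, is correct as written, but the patch you propose for the orthant issue has a gap. Replacing $(-\delta)\one$ by $w=\max\{(-\delta)\one,-\omega_i\}$ does secure $\widehat z_i+\omega_i\ge 0$ and $\widehat z_i\in Q_i$, but it can break $Z$-membership. With your weights $\widehat\lambda$, a direct computation gives
\[
\widehat\lambda_i\widehat z_i+\sum_{t\ne i}\widehat\lambda_t z_t^*\;\le\;\frac{(1-\mu)(-\delta)\one+\lambda_i^*\mu\, w}{1-\mu+\lambda_i^*\mu},
\]
and in any coordinate $a$ with $(\omega_i)_a<\delta$ (so $w_a=-(\omega_i)_a>-\delta$) this right-hand side exceeds $-\delta$. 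So the new tuple need not lie in $Z$, and the contradiction with optimality is not established. The fix is simply to use the paper's coordinate-reduction move instead: under your own hypothesis $\|z_i^*\|>\Lambda$ you already know $u_i(x_i^*)>u_i(\sum_j\omega_j)$, hence $z_i^*\not\le 0$ (else $x_i^*\le\omega_i$ would give $u_i(x_i^*)\le u_i(\sum_j\omega_j)$), so a positive coordinate exists and the one-coordinate reduction goes through without any orthant issue.
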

\begin{proof}
The proof of this claim is almost identical to that of Lemma~\ref{lemma:bounded-hull}--the difference being that here we use the bound provided by $\Lambda$, instead of relying on strong concavity.

To begin with, note that the reverse direction of the claim is direct, since $\widetilde{Q}_i \subset Q_i$ for all $i \in [h]$. 

For the forward direction, we have vectors $z_i \in Q_i$ and a convex combination $\lambda_i \geq 0$, for $i \in [h]$, such that $\sum_{i=1}^h \lambda_i = 1$ and
\begin{align}
\sum_{i=1}^h \lambda_i z_i = (-\delta) \one \label{eq:comb-repeat}
\end{align}

Let $R :=\max_i \{\| z_i \| :i\in [h]\}$. By definition, the $z_i$s are contained in the (closed) Euclidean ball $B(R)$ of radius $R$ and center $\zero$. 
Note that, for each $i \in [h]$, the intersection $Q_i \cap B(R)$ is a compact set. 

Let $Z$ denote the collection of all tuples \[(z'_1, z'_2, \ldots, z'_h) \in \left(Q_1 \cap B(R) \right) \times \left( Q_2 \cap B(R) \right) \times \ldots \times \left( Q_h \cap B(R)\right)\] for which there exists exists convex coefficients $\lambda'_i$s such that $\sum_i \lambda'_i z'_i \leq (-\delta) \one$, i.e., there exists a convex combination of $z'_i$s which is component-wise upper bounded by $(-\delta) \one$.\footnote{The definition of $Q_i$s provide a component-wise lower bound as well: $z_i \geq - \omega_i$, for each vector $z_i \in Q_i$.}

From~\eqref{eq:comb-repeat}, we know that $Z$ is nonempty. Given that the sets $(Q_i \cap B(R))$s are compact, one can show that $Z$ is compact as well. 
Hence, the problem of minimizing $\max\left\{ \|z_i' -  (-\delta) \one \| \ \mid (z'_i)_i \in Z \right\}$ admits an optimal solution, say $(z^*_t)_t$. Note that, by definition of $Z$, there exists convex coefficients $(\lambda^*_t)_{t \in H^*}$ that satisfy $\sum_{ t \in H^*} \lambda^*_t z^*_t \leq (-\delta) \one$;  
here, subset $H^* \subseteq [h]$ is selected to ensure that $\lambda^*_t > 0 $ for all $t \in H^*$. 
 
Next, we will prove that $\| z^*_t \| \leq \Lambda $, for all $t \in H^*$. Subsequently, we will show that using $z^*_t$s we can obtain vectors $\widetilde{z}_t \in Q_t$ that satisfy the same norm bound ($\| \widetilde{z}_t \| \leq \Lambda $) and whose convex combination is equal to $(-\delta) \one$. This norm bound implies that $\widetilde{z}_t \in \widetilde{Q}_t$ and, hence, leads to the desired containment: $(-\delta) \one \in \cvh \left( \bigcup_{i=1}^h \widetilde{Q_i} \right)$. 
 
The bundle $x^*_i \coloneqq z^*_i + \omega_i$ must satisfy $u_i(x^*_i)=u_i(\bar y_i) \leq u_i(\sum_i \omega_i)$: if this is not the case (i.e., we have a strict inequality $u_i(x^*_i) > u_i(\bar y_i)$), then by reducing a positive component\footnote{If all the components of $z^*_i$ are negative, then $x^*_i \leq \omega_i$ and we get the desired bound $u_i(x^*_i) \leq u_i(\omega_i) \leq u_i(\sum_i \omega_i) $.} of $z^*_i = x^*_i - \omega_i$ we can ensure that $z^*_i$ moves closer to $(-\delta) \one$ and at the same time $z^*_i$ continues to be in $Q_i$ (i.e., the following inequality continues to hold $u_i(x^*_i) \geq u_i(\bar y_i)$). Also, note that such a reduction maintains the containment of $z^*_i$s in $Z$; specifically, the following inequality continues to hold $\sum_{ t } \lambda^*_t z^*_t \leq (-\delta) \one$. A repeated application of this argument gives us $x^*_i = z^*_i + \omega_i$ with the property that $u_i(x^*_i)=u_i(\bar y_i) \leq u_i \left( \sum_i \omega_i \right)$, for all $i$. Therefore, using the definition of $\Lambda$ (see \eqref{eq:defn-Lambda}), we obtain the stated bound $\| z^*_i \| \leq \Lambda$. 

To complete the proof we will show that $z^*_t$s can be transformed into vectors $\widetilde{z}_t \in Q_t$ that satisfy the same norm bound and whose convex combination is equal to $(-\delta) \one$. Write $\varphi \coloneqq \sum_t \lambda^*_t z^*_t $ and note that $\varphi \leq (-\delta) \one$. If component $a \in [\ell]$ of $\varphi$ is strictly less than $-\delta$, then there exists a $z^*_i$ such that its $a$th component is also less than $-\delta$: $z^*_{i, a} < -\delta$. We can increase $z^*_{i,a}$ till either it becomes equal to zero, or the $a$th component of $\varphi$ reaches $-\delta$.\footnote{Here, the convex coefficients, $\lambda^*_t$s, remain unchanged.} Note that in this transformation while the $a$th component of $z^*_i$ increases in value, it decreases in magnitude. Hence, the utility $u_i(z^*_i + \omega_i)$ increases and the norm of $z^*_i$ decreases. Repeatedly applying this procedure gives us vectors $\widetilde{z}_t \in Q_t$ such that $\| \widetilde{z}_t \| \leq \Lambda$ and $\sum_t \lambda^*_t \widetilde{z}_t  = (-\delta) \one$. 
 
Overall, this implies that $(-\delta) \one \in \cvh \left( \bigcup_{i=1}^h \widetilde{Q}_i\right)$ and the stated claim follows. 
\end{proof}

Lemma~\ref{lemma:plc-containment} leads to the following characterization for approximate equilibria in PLC economies. 

\begin{lemma}
\label{lemma:non-containment-plc}
An allocation $\overline{y}$ is an $\varepsilon$-Walrasian allocation in a PLC economy $\Econ$ iff 
\begin{align*}
\left(-\delta \right) \one \notin \cvh \left( \bigcup_{i=1}^h \widetilde{Q}_i \right).
\end{align*}
Here, for each consumer $i \in [h]$, the set $\widetilde{Q}_i$ is as defined above.
\end{lemma}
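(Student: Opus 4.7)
The proof will mirror the structure of Lemma~\ref{lemma:non-containment} almost verbatim, with Lemma~\ref{lemma:plc-containment} substituted for Lemma~\ref{lemma:bounded-hull}. I will prove the two directions of the iff separately.

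For the "only if" direction, I plan to argue the contrapositive: assume $(-\delta)\one \in \cvh \left( \bigcup_{i=1}^h \widetilde{Q}_i \right)$ and derive that $\overline{y}$ cannot be $\varepsilon$-Walrasian. Since $\widetilde{Q}_i \subseteq Q_i$, there exist $z_i \in Q_i$ and convex coefficients $\lambda_i$ with $\sum_i \lambda_i z_i = (-\delta)\one$; setting $x_i = z_i + \omega_i$ gives $u_i(x_i) \geq u_i(\overline{y}_i)$ for every $i$. If $(p,\overline{y})$ were an $\varepsilon$-Walrasian equilibrium with $p \in \Delta$, taking inner product on both sides yields $\sum_i \lambda_i p^T(x_i-\omega_i) = -\delta$, so some index $j$ must satisfy $p^T x_j \leq p^T \omega_j - \delta = p^T \omega_j - \varepsilon/h$, contradicting condition (ii) of $\varepsilon$-Walrasian equilibrium (with an arbitrarily small slack this becomes strict, exactly as in the smooth case).

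For the "if" direction, I assume $(-\delta)\one \notin \cvh \left( \bigcup_{i=1}^h \widetilde{Q}_i \right)$ and invoke Lemma~\ref{lemma:plc-containment} in its contrapositive form with $\eta=0$ to conclude $(-\delta)\one \notin \cvh \left( \bigcup_{i=1}^h Q_i \right)$. The hyperplane separation theorem then furnishes a nonzero $p \in \mathbb{R}^\ell$ with $p \cdot \left((-\delta)\one\right) \leq p \cdot \cvh \left( \bigcup_{i=1}^h Q_i \right)$. Since PLC utilities are weakly monotone in each coordinate (all coefficients $U^k_{i,j}$ are nonnegative), each $Q_i$ is upward closed, forcing $p \geq \zero$ componentwise; after normalizing we may take $p \in \Delta$. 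For every $j$ and any $y \in \mathbb{R}^\ell_+$ with $u_j(y) > u_j(\overline{y}_j)$, the vector $y-\omega_j \in Q_j$, and separation yields $p^T y \geq p^T \omega_j - \delta$, giving condition (ii). Applying this inequality to $y=\overline{y}_j$ (using continuity of $u_j$ and a vanishing margin, exactly as in the proof of Theorem~\ref{thm:quant-debreu-scarf}) and then invoking market clearing $\sum_j \overline{y}_j = \sum_j \omega_j$ gives $p^T(\overline{y}_i - \omega_i) = \sum_{j \neq i} p^T(\omega_j - \overline{y}_j) \leq (h-1)\delta < \varepsilon$, establishing condition (i).

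The main obstacle, such as it is, is a minor bookkeeping one: PLC utilities are only weakly monotone (not strictly monotone, as assumed in Lemma~\ref{lemma:non-containment}), so I have to justify $p \geq \zero$ from weak monotonicity plus the fact that $Q_i$ is upward closed, rather than quoting strict monotonicity. The passage from $u_j(y) \geq u_j(\overline{y}_j)+\eta$ (true separation regime) to $u_j(y) > u_j(\overline{y}_j)$ (the definition of $\varepsilon$-Walrasian) is handled, as in the smooth proof, by taking $\eta \to 0^+$ and using continuity of the (piecewise-linear) utilities; no additional argument beyond what appears in the proof of Theorem~\ref{thm:quant-debreu-scarf} is needed. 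No use is made of strong concavity anywhere, which is why the whole chain goes through in the PLC setting.
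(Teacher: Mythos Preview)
Your proposal is correct and is essentially identical to the paper's approach: the paper explicitly states that the proof of Lemma~\ref{lemma:non-containment-plc} is identical to that of Lemma~\ref{lemma:non-containment}, with Lemma~\ref{lemma:plc-containment} substituted for Lemma~\ref{lemma:bounded-hull}, which is precisely what you do. Your added remarks on weak monotonicity (to justify upward closure of $Q_i$ and hence $p\geq\zero$) are a reasonable elaboration of a point the paper glosses over; note, however, that Lemma~\ref{lemma:plc-containment} has no $\eta$ parameter and the budget inequality for $\overline{y}_j$ follows directly from $\overline{y}_j-\omega_j\in Q_j$, so the limiting argument you mention is unnecessary here.
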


The proof of this result is identical to that of Lemma~\ref{lemma:non-containment} and is omitted; one has to simply use Lemma~\ref{lemma:plc-containment}, instead of Lemma~\ref{lemma:bounded-hull}.

We now establish the main result of this section. 
 
\begin{theorem}
\label{theorem:plc}
There exists a polynomial-time algorithm that---given an allocation $\overline{y} = (\overline{y}_i)_{i \in [n]}$ in an exchange economy $\Econ=((u_i, \omega_i))_{i \in [n]}$ with PLC utilities---determines whether $\overline{y}$ is an $\varepsilon$-Walrasian allocation, or not.
\end{theorem}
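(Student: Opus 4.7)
The plan is to mirror the structure of the proof of Theorem~\ref{theorem:testing-algorithm}, using Lemma~\ref{lemma:non-containment-plc} in place of Lemma~\ref{lemma:non-containment}. By that characterization, it suffices to decide efficiently whether $(-\delta)\one$ lies in $\widetilde{\mathcal{Q}} \coloneqq \cvh\bigl(\bigcup_{i=1}^h \widetilde{Q}_i\bigr)$. I would again invoke the equivalence of optimization and separation~\cite{grotschel2012geometric}: first produce a polynomial-time algorithm \textsc{Alg} that maximizes any linear functional $c^\top z$ over $\widetilde{\mathcal{Q}}$, and then, via the Ellipsoid method applied to the polar, turn it into a separation oracle \textsc{Sep} for $\widetilde{\mathcal{Q}}$, which applied at $(-\delta)\one$ resolves the test.

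To build \textsc{Alg}, note that since $\widetilde{\mathcal{Q}}$ is the convex hull of the $\widetilde{Q}_i$, one has $\max\{c^\top z : z \in \widetilde{\mathcal{Q}}\} = \max_{i\in[h]} \max\{c^\top z_i : z_i \in \widetilde{Q}_i\}$, so it suffices to solve each per-agent problem. The key new feature of the PLC setting is that each $\widetilde{Q}_i$ admits a clean geometric description: because $u_i(x) = \min_k \{\langle U^k_i, x\rangle + T^k_i\}$, the upper-contour constraint $u_i(z+\omega_i)\ge u_i(\overline{y}_i)$ reduces to finitely many explicit linear inequalities $\langle U^k_i, z+\omega_i\rangle + T^k_i \ge u_i(\overline{y}_i)$, one per piece. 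Together with $z+\omega_i\ge \zero$ and the ball constraint $\|z\|\le \Lambda$, this gives a compact convex body with an immediate separation oracle: a violated linear constraint is its own separating hyperplane, while a violation of the norm constraint yields the separator $-z/(\Lambda\|z\|)$ exactly as in the analysis of case (ii) in the proof of Theorem~\ref{theorem:testing-algorithm}. Hence optimization over each $\widetilde{Q}_i$ can be carried out in polynomial time by the Ellipsoid method, and combining the $h$ per-agent optimizers produces \textsc{Alg} for $\widetilde{\mathcal{Q}}$.

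The nonempty-interior requirement needed to invoke the optimization-separation equivalence can be arranged exactly as in the strongly concave case, by enlarging the radius of the bounding ball to at least $\max\{\Lambda, 2\|\overline{y}_i\|\}$; Lemma~\ref{lemma:plc-containment} is insensitive to this enlargement, since its proof only uses that the radius is at least $\Lambda$. Once \textsc{Sep} is obtained, one applies it at the query point $(-\delta)\one$: if the point is separated from $\widetilde{\mathcal{Q}}$, the allocation is not $\varepsilon$-Walrasian, and otherwise it is, by Lemma~\ref{lemma:non-containment-plc}. As in the remark following Theorem~\ref{theorem:testing-algorithm}, the witnessing equilibrium price vector can moreover be recovered as the normalized displacement $\bigl(\mathcal{P}_{\widetilde{\mathcal{Q}}}((-\delta)\one) - (-\delta)\one\bigr)/\bigl\|\mathcal{P}_{\widetilde{\mathcal{Q}}}((-\delta)\one) - (-\delta)\one\bigr\|_1$, computable through the same separation oracle.

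The main obstacle, and the analogue of the strong-concavity step in Lemma~\ref{lemma:bounded-hull}, is ensuring that $\Lambda$ has polynomial bit complexity, so that the whole scheme runs in time polynomial in the input. This is the content of the componentwise bound $x_a \le u_i\bigl(\sum_j \omega_j\bigr)/\min_{k,j:\,U^k_{i,j}\ne 0} U^k_{i,j}$ derived from the PLC form in the paragraph preceding Lemma~\ref{lemma:plc-containment}; converting this $\ell_\infty$ bound to a Euclidean bound costs only a factor of $\sqrt{\ell}$, so $\Lambda$ is indeed of polynomial bit complexity in the parameters $U^k_{i,j}, T^k_i, \omega_{i,j}$ describing $\Econ$. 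Combining this with the Ellipsoid-based construction of \textsc{Sep} and Lemma~\ref{lemma:non-containment-plc} yields the stated polynomial-time testing algorithm.
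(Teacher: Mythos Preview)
Your proposal is essentially the paper's own proof: reduce via Lemma~\ref{lemma:non-containment-plc} to membership of $(-\delta)\one$ in $\widetilde{\mathcal{Q}}$, use optimization--separation equivalence, and optimize over each $\widetilde{Q}_i$ by the Ellipsoid method with a separation oracle built from the PLC pieces and the ball constraint. Your framing of the $Q_i$-separator in terms of explicit linear inequalities $\langle U^k_i, z+\omega_i\rangle + T^k_i \ge u_i(\overline{y}_i)$ is exactly the paper's subgradient separator specialized to PLC (the paper picks the minimizing piece $\bar{k}$ and uses $\pi = (U^{\bar{k}}_{i,j})_j$; any violated piece works equally well).

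One slip: you have the direction of the final conclusion reversed. By Lemma~\ref{lemma:non-containment-plc}, $\overline{y}$ \emph{is} $\varepsilon$-Walrasian precisely when $(-\delta)\one$ is \emph{separated from} $\widetilde{\mathcal{Q}}$, not the other way around; indeed, your subsequent remark about recovering the price vector from the projection only makes sense in the separated case, so your text is internally inconsistent on this point. With that correction, the argument matches the paper.
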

\begin{proof}
The design of the testing algorithm for PLC utilities is quite similar to the method developed in the proof of Theorem~\ref{theorem:testing-algorithm}. We present the details for the PLC setting for completeness. 

As a direct consequence of Lemma~\ref{lemma:non-containment-plc} we have: testing for approximately Walrasian allocation corresponds to determining whether the vector $(-\delta) \one$ is contained in the convex hull of the $\widetilde{Q}_i$s; see~\eqref{eq:defn-tildeQ} for the definition of these sets. 

Write $\widetilde{\mathcal{Q}} \coloneqq \cvh \left( \bigcup_{i=1}^h \widetilde{Q}_i \right)$. We will develop an efficient algorithm, \textsc{Alg}, for solving linear optimization problems over $\widetilde{\mathcal{Q}}$, i.e., for solving problems of the form 
\begin{align} 
\max \ c^Tz & \ \ \  \text{ subject to } \ z \in \widetilde{\mathcal{Q}} \label{eq:LP-plc}
\end{align}
Here, $c \in \mathbb{R}^\ell$ is an input vector. The equivalence of optimization and separation (see, e.g.,~\cite{grotschel2012geometric}) implies that \textsc{ALG} can be used to design a polynomial-time algorithm \textsc{Sep} that provides a separation oracle for $\widetilde{\mathcal{Q}}$. That is, using \textsc{Sep} we can perform the desired test of determining whether $(-\delta) \one \in \widetilde{\mathcal{Q}}$, or not. 

In order to apply the optimization-separation equivalence we need to ensure that $\widetilde{\mathcal{Q}}$ is compact, convex, and has a nonempty interior. These properties are satisfied by $\widetilde{Q}_i$s individually, hence they hold for $\widetilde{\mathcal{Q}}$ as well. Therefore, we can evoke the equivalence (via an application of the Ellipsoid method over the \emph{polar} of $\widetilde{\mathcal{Q}}$) and obtain the algorithm \textsc{Sep}. 

To develop the algorithm, \textsc{Alg}, that efficiently solves linear optimization problems of the form (\ref{eq:LP-plc}), we note that the feasible set $\widetilde{\mathcal{Q}}$ is a convex hull of the $\widetilde{Q}_i$s. Hence, for any $c \in \mathbb{R}^\ell$, an optimal solution of (\ref{eq:LP-plc}) can be obtained by solving 
\begin{align}
\max_{i \in [h]} \left( \max \ c^Tz_i  \ \ \  \text{ subject to } \ z_i \in \widetilde{Q}_i \right) \label{eq:LP1-plc}
\end{align}
Here, for each $i$, the decision variable $z_i \in \mathbb{R}^\ell$ lies in the set $\widetilde{Q}_i$. Below we will provide, for each $i$, a polynomial-time algorithm, $\textsc{Alg}_i$, that solves the linear optimization problem over $\widetilde{Q}_i$, i.e., $\textsc{Alg}_i$ efficiently solves $\max \ c^Tz_i  \ \  \text{ subject to } \ z_i \in \widetilde{Q}_i$. Hence, $\textsc{Alg}$ can be obtained by directly taking a maximum over the (optimal) solutions obtained by the $\textsc{Alg}_i$s. 

We will now complete the chain of arguments mentioned above by designing the optimization algorithm $\textsc{Alg}_i$. This algorithm is itself based on the Ellipsoid method. As detailed below, the gradients of the utility function $u_i$ (at different points) can be used to separate  $\widetilde{Q}_i$ from vectors that are not contained in it. Hence, with this separation technique in hand, we can apply the Ellipsoid method over $\widetilde{Q}_i$ to obtain $\textsc{Alg}_i$.\footnote{Recall that, $\widetilde{Q}_i$s are compact, convex, and have a nonempty interior. Hence, the Ellipsoid method is applicable over these sets.}  	

Given a query vector $q \in \mathbb{R}^\ell$, it is easy to test if $q \in \widetilde{Q}_i  \coloneqq Q_i \cap \ \left\{ z \in \mathbb{R}^\ell \ : \ \| z \| \leq  \Lambda \right\} $. We directly verify (a) $u_i(q + \omega_i) \geq u_i(\overline{y}_i)$  (to ensure that $q \in Q_i$) and (b) $\| q \| \leq \Lambda$.\footnote{We do not have to compute $\Lambda$ exactly. Here, for the algorithm, an upper bound (with polynomial bit complexity) of $\Lambda$ suffices.}  

Consider the case in which $q \notin \widetilde{Q}_i $. To run the Ellipsoid method (that underlies $\textsc{Alg}_i$), we need a separating hyperplane for such a $q \in \mathbb{R}^\ell$. There are two complementary (though, nonexclusive) cases either (i) $q \notin Q_i$ (i.e., $u_i(q + \omega_i) < u_i(\overline{y}_i)$) and (ii) $\| q \| > \Lambda$.

In case (i), the subgradient at $\chi \coloneqq q + \omega_i$ provides the separating hyperplane: in particular, for the PLC utility $u_i$, let $\bar{k} \in \argmin_{k} \left\{ \sum_{j} U^k_{i,j} \ \chi_j \ + \ T^k_i \right\}$, i.e., $u_i(q + \omega_i) = \sum_{j} U^{\bar{k}}_{i,j} \chi_j \ +  T^{\bar{k}}_i$. The (subgradient) vector $\pi \coloneqq (U^{\bar{k}}_{i,j})_{j \in [\ell]}$ provides the desired separation. This follows from the fact that, for any $z \in \mathbb{R}^\ell$ we have 
\begin{align}
u_i(z + \omega_i) & \leq  \pi^T \left( z + \omega_i \right) + T^{\bar{k}}_i \nonumber \\ 
& =  \pi^T \left( q + \omega_i \right) + T^{\bar{k}}_i + \pi^T \left( z  - q \right) \nonumber \\
& = u_i( q + \omega_i) + \pi^T \left( z  - q \right) \label{ineq:subgradient} 
\end{align}

Specifically, if $z \in Q_i$, then $u_i(z + \omega_i) \geq u_i(\overline{y}_i) > u_i (q + \omega_i)$. Hence, using \eqref{ineq:subgradient}, we get the desired separation
\begin{align*}
\pi^T q  & < \pi^T z \qquad \text{ for all } z \in \widetilde{Q}_i \subset Q_i.
\end{align*}

In case (ii), the vector $\pi \coloneqq - \frac{q}{\Lambda \|q\|} $ suffices. Note that 
\begin{align}
\pi^T q = - \frac{\| q \|}{\Lambda} & < - 1 \label{ineq:norm-case}
\end{align}
For any $z \in \widetilde{Q}_i$ we have $\| z \| \leq \Lambda$. Now, the Cauchy-Schwartz inequality gives us $| \pi^T z |  \leq \| \pi \| \| z \|  = \frac{1}{\Lambda} \| z \| \leq 1$. This inequality along with (\ref{ineq:norm-case}) shows that $\pi$ is indeed a separating hyperplane: $\pi^T q < \pi^T z $ for all $z \in \widetilde{Q}_i$.


Overall, we observe that separation with respect to the $\widetilde{Q}_i$s can be performed efficiently. Hence, via the Ellipsoid method, we obtain, for each $i$, the algorithm $\textsc{Alg}_i$ that optimizes over $\widetilde{Q}_i$. 

Combining $\textsc{Alg}_i$s we get the optimization algorithm (over $\widetilde{\mathcal{Q}}$) $\textsc{Alg}$, which, in turn, leads to $\textsc{Sep}$ (the desired algorithm that separates with respect to $\widetilde{\mathcal{Q}}$).

\end{proof}

\section*{Acknowledgements}
Federico Echenique thanks the National Science Foundation for its support through grants SES-1558757 and CNS-1518941. Siddharth Barman gratefully acknowledges the support of a Ramanujan Fellowship (SERB - {SB/S2/RJN-128/2015}) and a Pratiksha Trust Young Investigator Award.

\bibliographystyle{alpha}
\bibliography{core}


\appendix
\section{Other Notions of Approximate Equilibria}
\label{appendix:other-notions}
In this section we will show that the notion of approximate Walrasian equilibrium considered in this work relates to other notions studied in computer science and economics. 

A  common alternate (to the one considered in this work) definition of approximation seeks to relax how exactly the consumers optimize, see, e.g.,~\cite{deng2002complexity} and~\cite{garg2017settling}. This notion requires that the consumers are approximately maximizing their utilities, though it assumes that the demand is approximately equal to the supply. Specifically, under this complementary notion, a pair $(p, \overline{x}) \in \Delta \times \mathbb{R}^{h \ell}_+$, with price vector $p \in \Delta$ and allocation $\overline{x}$, is said to be an $\widehat{\varepsilon}$-equilibrium (in an economy $\Econ=((u_i, \omega_i))_i$ with $h$ consumers and $\ell$ goods) iff the following conditions hold for all consumers $i \in [h]$:\footnote{Note that (C1) and (C2) are a strengthening of the alternate formulation mentioned above, since here we insist that the demand is exactly equal to the supply.}

\begin{enumerate}
\item[(C1)] For any bundle $y \in \mathbb{R}^\ell_+$, with the property that $p^T y \leq p^T \omega_i$, we have $u_i(y) \leq u_i(\overline{x}_i) + \widehat{\varepsilon}$.
\item[(C2)] $|p^T \overline{x}_i - p^T \omega_i | \leq \widehat{\varepsilon}$.
\end{enumerate}

Note that, (C1) corresponds to the following requirement: $u_i(\overline{x}_i) \geq \left(\max_{y \in \mathbb{R}^\ell_+} \left\{ u_i(y) \mid p^T y \leq p^T \omega_i \right\} \right)- \widehat{\varepsilon}$.

We will show that an ${\varepsilon}$-approximate Walrasian equilibrium (as defined in Section~\ref{section:notation}) satisfies conditions (C1) and (C2) with $\widehat{\varepsilon} = \frac{\varepsilon \lambda \sqrt{\ell}}{h}$; here, $\lambda$ is the Lipschitz constant of the utilities $u_i$s.  

By definition (see Section~\ref{section:notation}), an $\varepsilon$-approximate Walrasian equilibrium, say $(p, \overline{x})$, satisfies (C2).\footnote{Here, assume that $\widehat{\varepsilon} \geq \varepsilon$, otherwise we can consider $\widehat{\varepsilon} = \max \left\{ \frac{ \varepsilon  \lambda \sqrt{\ell}}{h} , \varepsilon \right\}$.} 

To establish the claim, next we will prove the contrapositive form of (C1) holds for $(p, \overline{x})$. Consider a bundle $y \in \mathbb{R}^\ell_+$ with high utility $u_i(y) > u_i(\overline{x}_i) + \widehat{\varepsilon} = u_i(\overline{x}_i) + \frac{\varepsilon \lambda \sqrt{\ell}}{h} $. 

Observe that the Euclidean ball $B\left(y, \frac{ \varepsilon \sqrt{\ell}}{ h} \right)$, with center $y$ and radius $\frac{ \varepsilon \sqrt{\ell}}{ h}$, is entirely contained in the upper contour set $\left\{ x \in \mathbb{R}^\ell_+ \mid u_i(x) \geq u_i(\overline{x}_i) \right\}$. Say, towards a contradiction, that this is not the case. Then, there exists a bundle $y' \in \mathbb{R}^\ell_+$ on the boundary of the upper contour set that is at most a distance $\frac{\varepsilon \sqrt{\ell}}{h}$ away from $y$, i.e., a boundary point $y'$ that satisfies $\| y - y' \| \leq \frac{ \varepsilon \sqrt{\ell}}{ h}$. Since $y'$ is at the boundary of the upper contour set, $u_i(y') = u_i(\overline{x}_i)$. This fact contradicts the high utility of $y$: $|u_i(y) - u_i(y') | \leq \lambda \| y - y '\| \leq \frac{\varepsilon \lambda \sqrt{\ell}}{h}$.

Therefore, in particular, the vector $\left( y - \frac{\varepsilon \sqrt{\ell}}{h} \frac{p}{\| p \|} \right) \in B\left(y, \frac{\varepsilon \sqrt{\ell}}{h}  \right)$ belongs to the upper contour set. That is, $u_i \left( y - \frac{\varepsilon \sqrt{\ell}}{h} \frac{p}{\| p \|} \right) \geq u_i(\overline{x}_i)$. 

Hence, our definition of $\varepsilon$-Walrasian equilibria ensures that 
\begin{align}
p^T \left( y - \frac{\varepsilon \sqrt{\ell}}{h} \frac{p}{\| p \|} \right) \geq p^T \omega_i - \varepsilon/h \label{ineq:deficit}
\end{align}
Since the price vector $p \in \Delta$, we have $\| p \| \geq \frac{1}{\sqrt{\ell}}$. Rearranging inequality (\ref{ineq:deficit}) shows that $y$ satisfies (the contrapositive form of) the condition (C1): $p^T y \geq p^T \omega_i$. \\

\noindent
{\bf Supply vs demand and average budge gap.} 
In contrast to some of the other formulations, the definition of approximate equilibria considered in this work strictly enforces that the supply is equal to the demand. Also, we consider a bound on the budget gap, $|p^T \overline{x}_i - p^T \omega_i|$, for every consumer $i$. Hence, our approximate equilibrium satisfies the average budget gap requirement considered in~\cite{mas1979refinement}. 

\section{Illustrative Example of Strongly Concave Utilities}
\label{appendix:example}
This section provides a simple example to illustrate the interplay of strong concavity and other parameters related to the utility functions. Consider an economy $\Econ=((u_i, \omega_i))_{i \in [h]}$ (with $h$ consumers and $\ell$ goods) wherein the utility of each consumer is 
\begin{align*}
u(x) \coloneqq \frac{1}{N} \sum_{j=1}^\ell  \sqrt{x_j  + \theta }.
\end{align*}
Here, $x_j$ is the amount of good $j$ (present in the consumption bundle $x \in \mathbb{R}^\ell_+$), $\theta >0$ is a fixed constant, and $N >0$ is a normalization term. Here, dividing by a large enough, but fixed, parameter $N$ ensures that for each feasible bundle $x \in \mathbb{R}^\ell_+$ in $\Econ$, we have $u(x) \in (0,1)$. In particular, let $\overline{\omega}_j$ denote the total amount of good $j \in [\ell]$ present in the economy ($\overline{\omega}_j\coloneqq \sum_i \omega_{i,j}$). Then, $N = \sum_{j=1}^{\ell} \sqrt{\overline{\omega}_j  + \theta }$.

Also, note that the additive shift of $\theta$ ensures that the gradient of $u$ is bounded for all $x \in \mathbb{R}^\ell_+$ and the Lipschitz constant of the utilities $\lambda = \frac{1}{2\sqrt{\theta}}$.

One can show that a function $u: \mathbb{R}^\ell \mapsto \mathbb{R}$ is $\alpha$-strongly concave in a set $\mathcal{R} \subset \mathbb{R}^\ell$ iff $f(x) \coloneqq u(x) + \frac{\alpha}{2} \| x \|^2$ is concave within $\mathcal{R}$. That is, for any $\alpha >0$, if the Hessian of $f(x)$ is negative semidefinite for all $x \in \mathcal{R}$, then $u$ is $\alpha$ strongly concave in $\mathcal{R}$. 

The utility function $u$ is separable across the $\ell$ goods, hence its Hessian (at point $x \in \mathbb{R}^\ell_+$) is a diagonal matrix with the (diagonal) entries being $\frac{-1}{4 N ( x_j + \theta)^{3/2}}$ for each $j \in [\ell]$. That is, for an $\alpha >0$ and $x \in \mathbb{R}^\ell_+$, the Hessian of $f$ is again a diagonal matrix with entries $\frac{-1}{4 N ( x_j + \theta)^{3/2}} + \alpha$, with $ j \in [\ell]$. 

Therefore, within the Euclidean ball of radius $r$ (and center $\zero$), the function $u$ is  $\left(\frac{1}{4 N ( r + \theta)^{3/2}}\right)$-strongly concave; this value of $\alpha$ ensures that $f$ is negative semidefinite throughout the ball. 
  
 As stated in Section~\ref{section:notation}, we require strongly concavity to hold within an appropriately large set around the endowments.\footnote{For ease of presentation, here we assume that the endowments are $\zero$. Shifting all the vectors by $-\omega_i$, directly provide  the arguments for general endowment vectors.} Specially, we need $\alpha r^2 \geq \frac{2\varepsilon \lambda \ell}{h} + 2$ (see Section~\ref{section:notation}). Using the above-mentioned expression for the modulus of strong concavity, $\alpha$, the required inequality translates to 
 \begin{align*}
 \frac{\sqrt{r}}{4 N (1 + \theta/r)^{3/2}} \geq \frac{2\varepsilon \lambda \ell}{h} + 2
 \end{align*}   
 
 The left-hand-side of the previous inequality is an increasing function of $r$. Hence, this inequality holds for an appropriately large $r$ (which only depends on the parameters of $\Econ$ and not on the replication factor $n$). With this $r$ and the corresponding value of $\alpha$ in hand one can apply the results developed in this work.

\section{Proof of Lemma~\ref{lem:equaltreatment}}
\label{appendix:equal-treatment}

This section shows that, if the utilities of the consumers are strictly monotonic, continuous, and strictly concave, then allocations in the $h$-core of a replica economy satisfy the equal treatment property.

Let $\overline{x} = (\overline{x}_{1,1}, \ldots, \overline{x}_{1,h}, \ldots, \overline{x}_{n,1}, \ldots, \overline{x}_{n,h})$ be an $h$-core allocation of the economy $\Econ^n$. We will show that if $\overline{x}$ does not satisfy the equal treatment property, then a coalition of size $h$---consisting of the worse off consumers of each type---will block the allocation $\overline{x}$, contradicting the fact that it is in the $h$-core.  

Say, towards a contradiction, that $\overline{x}$ does not satisfy the equal treatment property. Since each consumer of type $t \in [h]$ has the same utility function, we can select the worse off consumer among the $n$ ones that have type $t$; in particular, for each $t \in [h]$, let index $j^*(t) \in [n]$ be such that $u_t (\overline{x}_{j^*(t), t} ) \leq u_t(\overline{x}_{i,t})$ for all $i \in [n]$. 

Consider the size-$h$ coalition $S \coloneqq \{(j^*(t), t)\}_{t \in [h]}$ and the $S$-allocation $(\widehat{x}_{t})_{(t \in [h])}$ defined as 
$\widehat{x}_{t}  \coloneqq  \frac{1}{n} \sum_{i=1}^n \overline{x}_{i, t}$.

Note that $(\widehat{x}_{t})_{t \in [h]}$ is indeed an $S$-allocation: Since $\overline{x}$ is an allocation in $\Econ^n$, it satisfies $\sum_{i=1}^n \sum_{t=1}^h \overline{x}_{i,t} = \sum_{i=1}^n \sum_{t=1}^h \omega_t = \sum_{t=1}^h \ n \omega_t$. Dividing by $n$ gives us $\sum_{t=1}^h \widehat{x}_t = \sum_{t=1}^h \omega_t$. Therefore, the consumers in the coalition $S$ can trade among themselves and, individually, obtain bundles $\widehat{x}_{t}$s. 

Furthermore, the strict concavity of $u_t$s ensures that $u_t(\widehat{x}_t) \geq u_t(\overline{x}_{j^*(t), t})$ for all $(j^*(t), t) \in S$---with one of the inequalities being strict. Hence, coalition $S$ blocks the allocation $\overline{x}$ and, by way of contradiction, the stated claim follows.

\end{document}